\title{Solving the Pricing Problem in a Branch-and-Price Algorithm for Graph Coloring using
	Zero-Suppressed Binary Decision Diagrams}
\author{David R.~Morrison, Edward C.~Sewell, Sheldon H.~Jacobson}
\date{}
\newcommand{\var}[1]{\textrm{var}(#1)}
\newcommand{\hi}[1]{\textrm{hi}(#1)}
\newcommand{\lo}[1]{\textrm{lo}(#1)}
\newcommand{\true}{\mathbf{1}}
\newcommand{\false}{\mathbf{0}}
\newcommand{\ZF}{Z_{\mathcal{F}}}
\newcommand{\ZFp}{Z_{\mathcal{F}'}}
\newcommand{\usc}{\string_}
\newtheorem{theorem}{Theorem}
\begin{document}
\maketitle

\begin{abstract}
Branch-and-price algorithms combine a branch-and-bound search with an exponentially-sized LP
formulation that must be solved via column generation.  Unfortunately, the standard branching rules
used in branch-and-bound for integer programming interfere with the structure of the column
generation routine; therefore, most such algorithms employ alternate branching rules to circumvent
this difficulty.  This paper shows how a \defn{zero-suppressed binary decision diagram} (ZDD) can be
used to solve the pricing problem in a branch-and-price algorithm for the graph coloring problem,
even in the presence of constraints imposed by branching decisions.  This approach facilitates a
much more direct solution method, and can improve convergence of the column generation subroutine.
\end{abstract}

\doublespacing

\section{Introduction}\label{sec:intro}

Branch-and-price algorithms are of increasing interest in many areas of operations research,
including assignment and scheduling problems \citep{Savelsbergh97, Maenhout10}, vehicle routing
problems \citep{Fukasawa06}, graph coloring \citep{Mehrotra96, Malaguti11}, multicommodity flow
problems \citep{Barnhart00}, and cutting stock problems \citep{Vance98, Pisinger07}, among others.
These algorithms combine a branch-and-bound search together with a tight linear programming
relaxation having an exponential number of variables (such a formulation can be derived, for
example, by the Dantzig-Wolfe decomposition method described in \citealp{Dantzig60}).  This LP
relaxation, called the \defn{master problem}, is used to produce good bounds that are used to prune
suboptimal regions of the search space.  Because the LP relaxation is too large to be stored in
memory, it must be solved via \defn{column generation}.

Let $\mathcal{S}$ be the set of variables for the LP relaxation; each of these variables is
associated with a column of the master problem's constraint matrix (thus, the constraint matrix has
an exponential number of columns).  In the column generation method, a related LP called the
\defn{restricted master problem} (RMP) is built using a (small) subset $\mathcal{S}' \subseteq
\mathcal{S}$ of variables.  The RMP can be solved efficiently by standard linear programming
techniques; however, the solution to the RMP is not necessarily optimal for the master problem.
Therefore, a subroutine called the \defn{pricing problem} must be called to either produce a
variable in $\mathcal{S} \setminus \mathcal{S}'$ that may improve the objective value of the RMP, or
provide a guarantee that no such variable exists.  If an improving variable is found, it is added to
$\mathcal{S}'$, and the RMP is re-optimized.  New columns are iteratively added to $\mathcal{S}'$
until the pricing problem reports that no (potentially) improving variables exist, at which point
column generation is terminated and the solution to the RMP is provably optimal for the master
problem.

In this paper, the pricing problem is assumed to be a weighted binary combinatorial optimization
problem which is characterized by a family of ``valid'' subsets of some universe; in a slight abuse
of notation, solutions to the pricing problem are interchangeably referred to as ``variables'',
``columns'', or ``subsets'', where the meaning is clear from context.  The weights associated with
the pricing problem are usually the optimal dual prices for the current solution to the RMP.  Thus,
the pricing problem returns a new variable with negative reduced cost if one exists; if such a
variable exists, it may improve the value of the RMP \citep{Bertsimas97}.  From this perspective,
the pricing problem is a separation oracle for the dual of the RMP, since new variables for the RMP
correspond to additional constraints in the dual.  

Since the pricing problem is itself often NP-hard, and must be solved exactly, solving it is
typically the most computationally-intensive part of a branch-and-price procedure.  Moreover, when
combined with the standard integer branching scheme used in most branch-and-bound algorithms, the
structure of the pricing problem is destroyed \citep{Barnhart98}.  In such a branching scheme, a
variable $x_i$ with fractional value $\alpha$ is selected at a subproblem in the search tree, and
two children are created with additional bounding constraints $x_i \leq \floor{\alpha}$ and $x_i
\geq \ceil{\alpha}$ (when all variables are binary, this is called \defn{$0-1$ branching}).
However, imposing these constraints changes the structure of the dual problem, which in turn
means that a different separation oracle must be queried at each subproblem in order to generate new
columns.  

In effect, the pricing problem at these subproblems no longer seeks a variable of minimum reduced
cost; it now must produce a variable with minimum reduced cost that respects the current branching
decisions.  This problem, called the \defn{constrained pricing problem}, is much harder than the
regular (or \defn{unconstrained}) pricing problem, and is often related to the $\kth$-shortest-path
problem, which is well-known to be a challenging NP-hard problem \citep{Garey79}. 

Moreover, many branch-and-price formulations have an inherent asymmetry due to the large number of
variables in the formulation.  This asymmetry can lead to extremely lopsided search trees if
standard integer branching techniques are used.  For example, in a problem with many covering
constraints (of the form $\sum_i x_i \geq b$), fixing a variable to zero may not induce much change
in the LP relaxation, but fixing a variable to $1$ may immediately satisfy many constraints.  Thus
long paths in the search tree can exist where many variables are fixed to $0$ but no progress
towards a solution is made.

Therefore, most branch-and-price algorithms employ specialized branching rules or other techniques
to avoid eliminating the pricing problem structure, as well as to maintain a more balanced search
tree.  For example, some branching rules modify the problem structure at each subproblem in the
search tree (e.g., the graph coloring rule of \citealp{Mehrotra96}); others branch on original
(non-reformulated) problem variables, or problem constraints \citep{Vanderbeck11}.  A related scheme
by \citet{Morrison14wide} uses a modified branching scheme called \defn{wide branching}, which does
not wholly eliminate calls to the constrained pricing problem, but restructures the search tree in
an attempt to reduce the number of such calls.

An alternate approach, called \defn{robust branch-and-cut-and-price} (BCP), eliminates calls to the
constrained pricing problem by further modifying the RMP so that branching restrictions can be added
without interfering with the pricing problem structure \citep{deAragao03}.  This approach introduces
additional variables and constraints into the RMP to form a linear program called the \defn{explicit
master}, which has the same objective value as the RMP.  Furthermore, branching decisions made by
the algorithm can be communicated to the pricing problem by imposing constraints on the reduced cost
values in the dual of the explicit master.  This approach has been used successfully in many
variants of the capacitated vehicle routing problem \citep{Pessoa08, Fukasawa06}, as well as related
problems such as the capacitated minimum spanning tree problem \citep{Uchoa08}.

However, no algorithm in the literature has described a way to perform branch-and-price without
using techniques like robust BCP or alternative branching rules, which often come at the expense of
ease of implementation and less-direct (global) solution methods.  Alternate branching rules do not
allow variables to be directly fixed to values, but rely on problem structure to implicitly fix
variables.  Similarly, the robust branch-and-cut-and-price methods require the solution of a larger
LP at each subproblem, and again use implicit methods to fix variables.  The wide branching approach
allows variables to be fixed explicitly, but to obtain good performance, it requires the derivation
of a problem-specific branching rule.

Therefore, the primary contribution of this research is to establish an efficient method for solving
the pricing problem in a branch-and-price algorithm for graph coloring that is directly compatible
with the standard integer branching scheme for graph coloring.  Two algorithmic ideas enable this
result: the first is to use a data structure called a \defn{zero-suppressed binary decision diagram}
(ZDD) to compactly store {\em all} valid solutions to the pricing problem.   A linear-time algorithm
is presented which adds restrictions to the ZDD to prohibit previously-generated columns from being
produced a second time, which allows the constrained pricing problem to be solved at every iteration
of column generation.  The second idea combines the above solution procedure with the cyclic
best-first search (CBFS) strategy to overcome the lopsided search trees that can result when using
standard integer branching with exponentially-sized problems.  Computational results are presented
showing nearly order-of-magnitude improvements in solution time for some instances when using these
two ideas, together with a proof of optimality for several previously unsolved instances.

The remainder of this paper is organized as follows: Section~\ref{sec:zdd} defines the ZDD data
structure and shows how it can be used to solve the pricing problem for the graph coloring problem.
This is done in three parts: first, Section~\ref{sec:defn} shows how ZDDs can be used to solve an
arbitrary unconstrained pricing problem; secondly, Section~\ref{sec:restrict} shows how to modify
this ZDD to solve the constrained pricing problem; finally, Section~\ref{sec:gczdd} shows how to
build a ZDD for the pricing problem in the graph coloring problem, namely, the maximal independent
set problem.  Next, Section~\ref{sec:cbfs} describes the cyclic best-first search strategy and how
it is used to mitigate the effects of lopsided search trees.  In Section~\ref{sec:compres}, the
computational results are given showing the effectiveness of the developed algorithm.  Finally,
Section~\ref{sec:end} outlines several future research directions for this technique.

\section{Zero-Suppressed Binary Decision Diagrams}\label{sec:zdd}

A zero-suppressed binary decision diagram \citep{Minato93} is an extension of the \defn{binary
decision diagram} (BDD) data structure proposed by \citet{Lee59} and \citet{Akers78}.  A BDD is a
directed acyclic graph that compactly encodes a binary function.  Previously, BDDs have been used in
circuit design and verification, as well as a number of formal logic applications \citep{Bryant92}.
More recently, BDDs have been used in a number of different optimization applications:
\citet{Bergman12} explore different variable orderings for BDDs used to characterize the independent
sets of a graph, and \citet{Hadzic08} add weights to the edges of a BDD to perform post-optimality
analysis in a discrete optimization setting.  Finally, \citet{Cire12} and \citet{Bergman12b}
describe how to use BDDs to compute upper and lower bounds to prune subproblems in a
branch-and-bound algorithm.

Despite their success in these related areas, BDDs and ZDDs have not appeared in conjunction with
branch-and-price in the literature before.  \citet{Behle07a} give a method for using BDDs to
enumerate vertices and facets of $0/1$ polyhedra (which can be viewed as solving the pricing problem
for a problem which has been reformulated via Dantzig-Wolfe decomposition), but they do not extend
this result to the branch-and-price setting.  Additionally, \citet{Behle07b} uses BDDs to generate
valid inequalities in a branch-and-cut algorithm to perform row generation instead of column
generation.

The use of decision diagrams together with branch-and-price algorithms can provide substantial
benefits to algorithm performance.  This is because decision diagrams often yield a way to compactly
(in practice) store all the columns even for an exponentially-sized integer program.  Note that
column generation techniques must still be used to solve the RMP, because the columns encoded in the
ZDD cannot be operated on directly by the LP solver.  Nonetheless, since the LP solver has
(implicit) access to all columns, the pricing problem can be solved exactly at every iteration of
column generation, which may improve the convergence of the column generation procedure.  In
contrast, most branch-and-price solvers terminate the pricing problem solver as soon as a column
with ``sufficiently negative'' reduced cost is found, due to the difficulty of solving the pricing
problem.  Moreover, as shown in Section~\ref{sec:restrict}, the set of valid pricing problem
solutions can be modified in place, allowing branch-and-price algorithms using ZDDs to employ
standard integer branching methods.

\subsection{Definitions and Notation}\label{sec:defn}

A ZDD is a modified version of a BDD that removes some nodes from the data structure to reduce its
size.  ZDDs are most useful when the binary function it encodes is ``sparse'' in the sense that there
are relatively few valid solutions to the function compared to the number of invalid solutions.
\citet{Minato93} observed that many combinatorial optimization problems have the sparsity
characteristic; thus, ZDDs are likely to be more useful in a branch-and-price setting than ordinary
BDDs.

Formally, a ZDD $Z$ is defined as follows.  Let $\mathcal{E}$ be an ordered set of $n$ elements
$(e_1,e_2,\hdots,e_n)$; then $Z$ is a directed acyclic graph satisfying the following properties:

\begin{enumerate}
\item There are two special nodes in $Z$ (denoted $\true$ and $\false$), called the \defn{true} 
	node and \defn{false} node, respectively.  Additionally, there is exactly one ``highest'' node
	in the topological ordering of $Z$, called the \defn{root} of $Z$, and denoted $r$.
\item Every node $a \in Z - \{\true, \false\}$ has two outgoing edges, a \defn{high edge} and a
	\defn{low edge}, which point to the \defn{high child} and \defn{low child}, respectively.  The
	high (low) child of $a$ is denoted $\hi{a}$ ($\lo{a}$).  The true and false nodes have no
	outgoing edges.  The \defn{indegree} of $a$, denoted $\delta^-(a)$, is the number of incoming
	edges to $a$; thus, $\delta^-(r) = 0$.
\item Every node $a \in Z - \{\true, \false\}$ is associated with some element $e_i \in
	\mathcal{E}$; the index of the associated element for $a$ is given by $\var{a}$, that is,
	$\var{a} = i$.  By convention, $\var{\true} = \var{\false} = n + 1$.  Finally, if $\var{a} = i$,
	then $\var{\hi{a}} > i$ and $\var{\lo{a}} > i$.  
\item No $a \in Z$ has $\hi{a} = \false$ (this property, called the \defn{zero-suppressed} property,
	is not satisfied by ordinary BDDs).
\end{enumerate}

Any set $A \subseteq \mathcal{E}$ induces a path $P_A$ from the root of $Z$ to either $\true$ or
$\false$, in the following manner: starting at the root of $Z$, if $a$ is the current node on the
path, the next node along the path is $\hi{a}$ if $e_{\var{a}} \in A$, and $\lo{a}$ otherwise.  The
\defn{output} of $Z$ on $A$, denoted $Z(A)$, is the last node along this path, which must be either
$\true$ or $\false$.  If $Z(A) = \true (\false)$, then $Z$ \defn{accepts} (\defn{rejects}) $A$.
Note that it is not required for $\var{b} = \var{a} + 1$ when $b$ is a child of $a$; in the case
when $\var{b} > \var{a} + 1$, the edge is called a \defn{long edge}, and when an induced path $P_A$
includes such an edge, if $\{e_{\var{a}+1},e_{\var{a}+2},\hdots,e_{\var{b}-1}\} \cap A \neq \emptyset$,
then $Z$ rejects $A$.  Finally, a ZDD \defn{characterizes} a family of sets $\mathcal{F} \subseteq
2^\mathcal{E}$ (denoted $\ZF$) if $Z$ accepts all sets in $\mathcal{F}$, and rejects all sets not in
$\mathcal{F}$ (see Figure~\ref{fig:zdd}).

\begin{fig}
\caption{\label{fig:zdd} Let $\mathcal{E} = (e_1, e_2, e_3, e_4)$, and $\mathcal{F} = \{\emptyset,
	\{e_1,e_2\}, \{e_3,e_4\}, \{e_1,e_2,e_3,e_4\}\}$ \citep{Andersen97}.  Solid lines represent high
	edges, and dashed lines represent low edges; all edges are directed downwards.  Grey nodes
	indicate whether $\ZF$ accepts $A$.} 
\begin{subfigure}[t]{0.2\textwidth}
	\centering
    \includegraphics[scale=0.5]{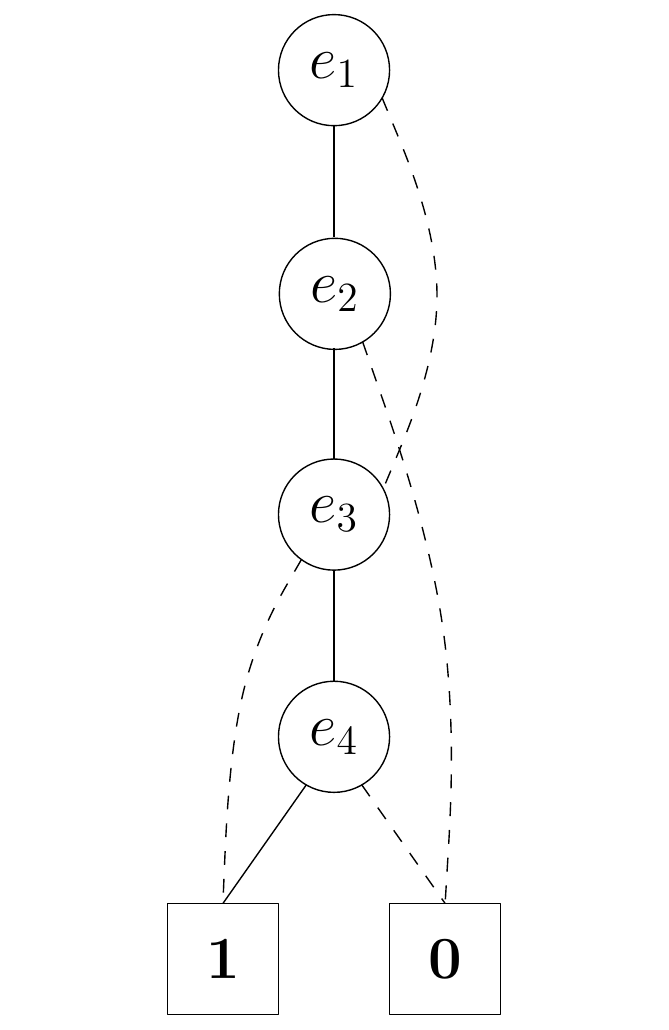}
	\caption{The unique smallest ZDD characterizing $\mathcal{F}$ for the given variable ordering.}
	\label{fig:zdd1}
\end{subfigure}
\qquad
\begin{subfigure}[t]{0.2\textwidth}
	\centering
    \includegraphics[scale=0.5]{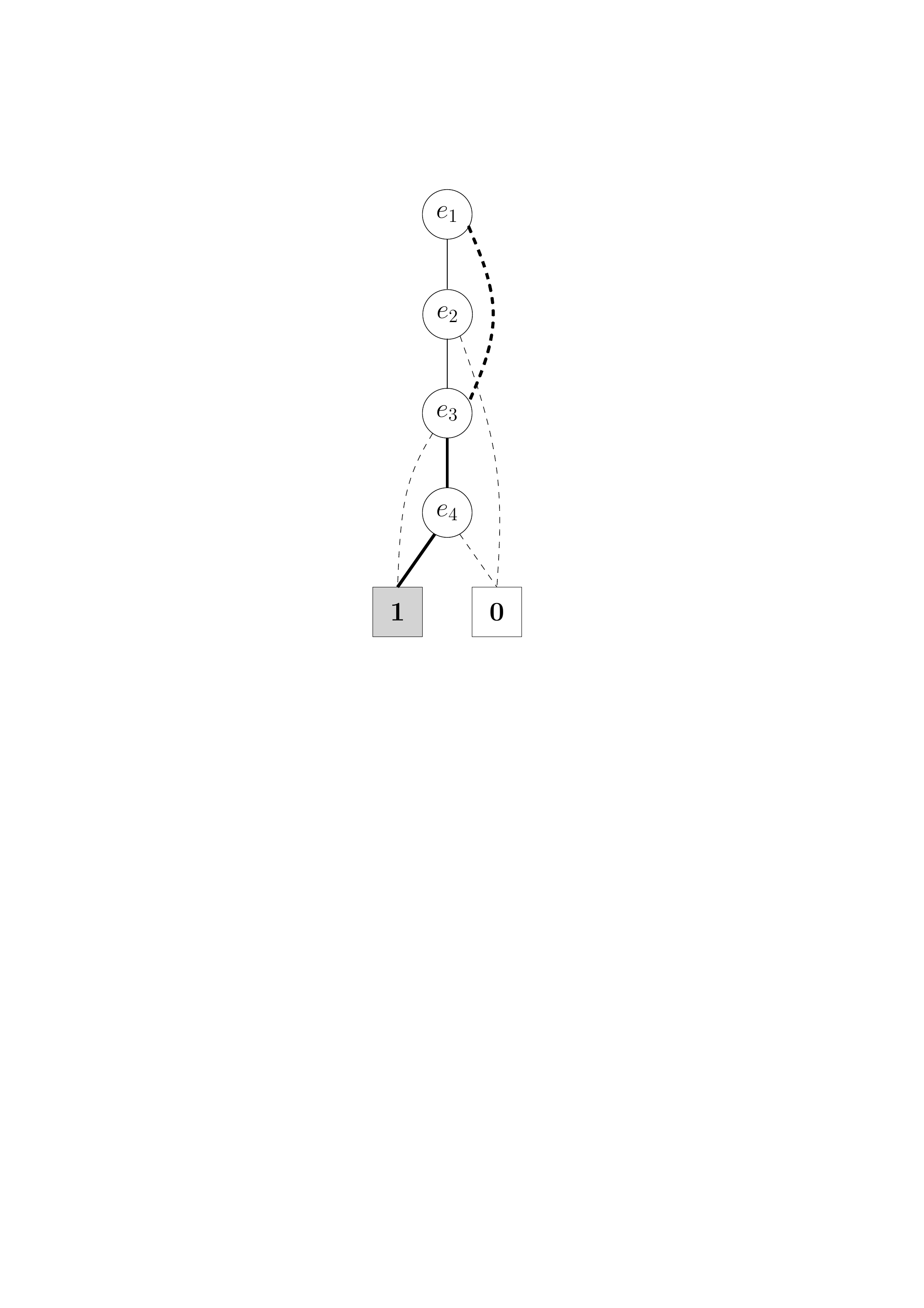}
	\caption{The induced path corresponding to the set $A = \{e_3, e_4\}$.}
	\label{fig:zdd2}
\end{subfigure}
\qquad
\begin{subfigure}[t]{0.2\textwidth}
	\centering
    \includegraphics[scale=0.5]{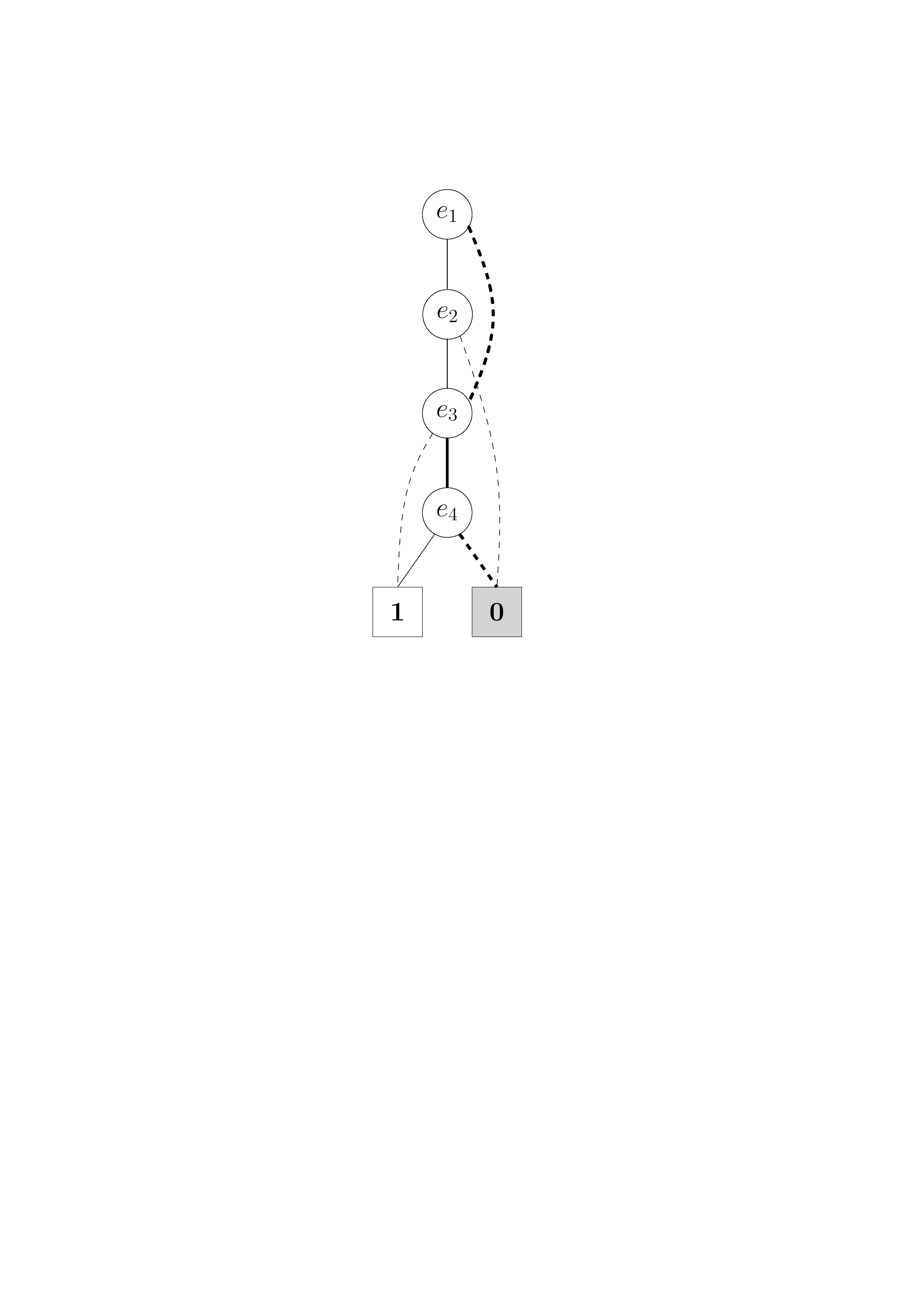}
	\caption{The induced path corresponding to the set $A = \{e_3\}$.}
	\label{fig:zdd3}
\end{subfigure}
\qquad
\begin{subfigure}[t]{0.2\textwidth}
	\centering
    \includegraphics[scale=0.5]{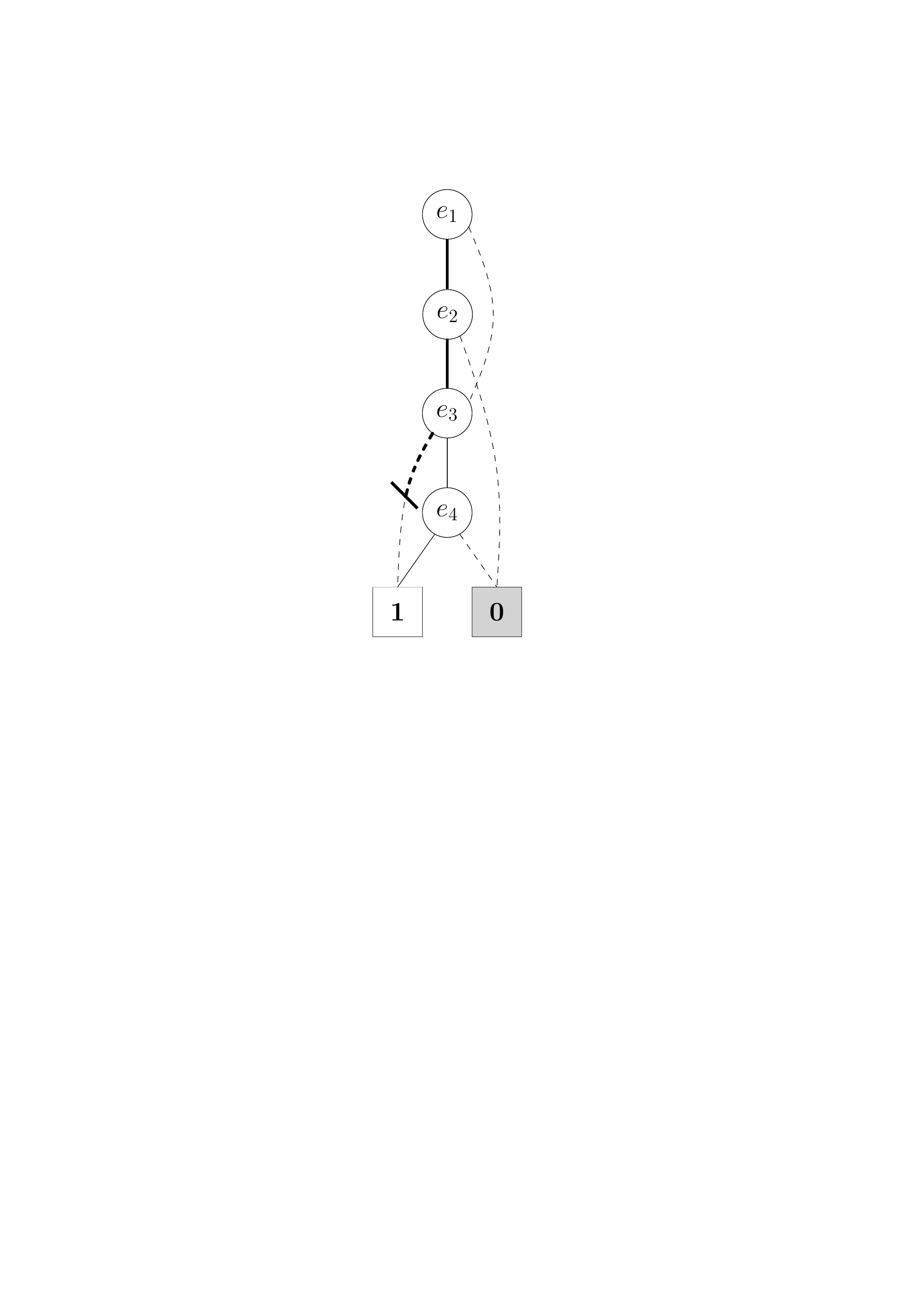}
	\caption{The ZDD does not accept $A = \{e_1,e_2,e_4\}$ since the long edge skips $e_4$, but $e_4
		\in A$.}
	\label{fig:zdd4}
\end{subfigure}
\end{fig}

For an arbitrary family $\mathcal{F}$ and an arbitrary vertex ordering, the size of $\ZF$ (that is,
the number of nodes and edges in the graph, denoted $|\ZF|$) may be exponential in $n$.  However,
\citet{Bryant86} shows that for any fixed variable ordering, every boolean function has a unique
smallest BDD characterizing it.  This result extends to ZDDs by observing that membership in
$\mathcal{F}$ can be defined as a boolean function.  One way to construct the unique smallest ZDD
characterizing $\mathcal{F}$ is to first construct the BDD for $\mathcal{F}$'s indicator function,
and then iteratively delete nodes whose high edge points to $\false$, connecting the low edge to the
node's parent.  Alternately, there exists a recursive algorithm to construct $\ZF$ directly
\citep{Knuth08}.

Note that the choice of ordering on the elements of $\mathcal{E}$ is important; \citet{Bryant86}
shows examples where different variable orderings yield BDDs of dramatically different sizes for the
same function.  In fact, it is NP-hard to determine the variable ordering for any arbitrary boolean
function that will yield the smallest BDD \citep{Bollig96}.  These results apply for ZDDs as well;
nevertheless, the use of heuristic variable orderings often results in tractably-sized ZDDs in
practical applications.

To see how ZDDs can be used to solve the unconstrained pricing problem in a branch-and-price
algorithm, let $\mathcal{F}$ be the set of all valid solutions to the pricing problem.  Then, using
the technique of \citet{Hadzic08}, assign weights to the edges of $\ZF$ and compute the longest path
or shortest path in $\ZF$ from the root to $\true$, depending on whether the pricing problem is a
maximization or minimization problem.  Specifically, let $(\pi_1,\pi_2,\hdots,\pi_n)$ be a weight
vector for the elements of $\mathcal{E}$; set the weight of edge $(a,b) \in \ZF$ to $\pi_{\var{a}}$
if $b = \hi{a}$, and $0$ otherwise.  Then, finding the longest or shortest path with respect to
$\{\pi\}$ from the root of $\ZF$ to $\true$ can be found in $O(|\ZF|)$ time using dynamic
programming \citep{Sedgewick11}.  The resulting path corresponds to the optimal solution to the
pricing problem (see Figure~\ref{fig:zdd_opt}).

\begin{fig}
\includegraphics[scale=0.7]{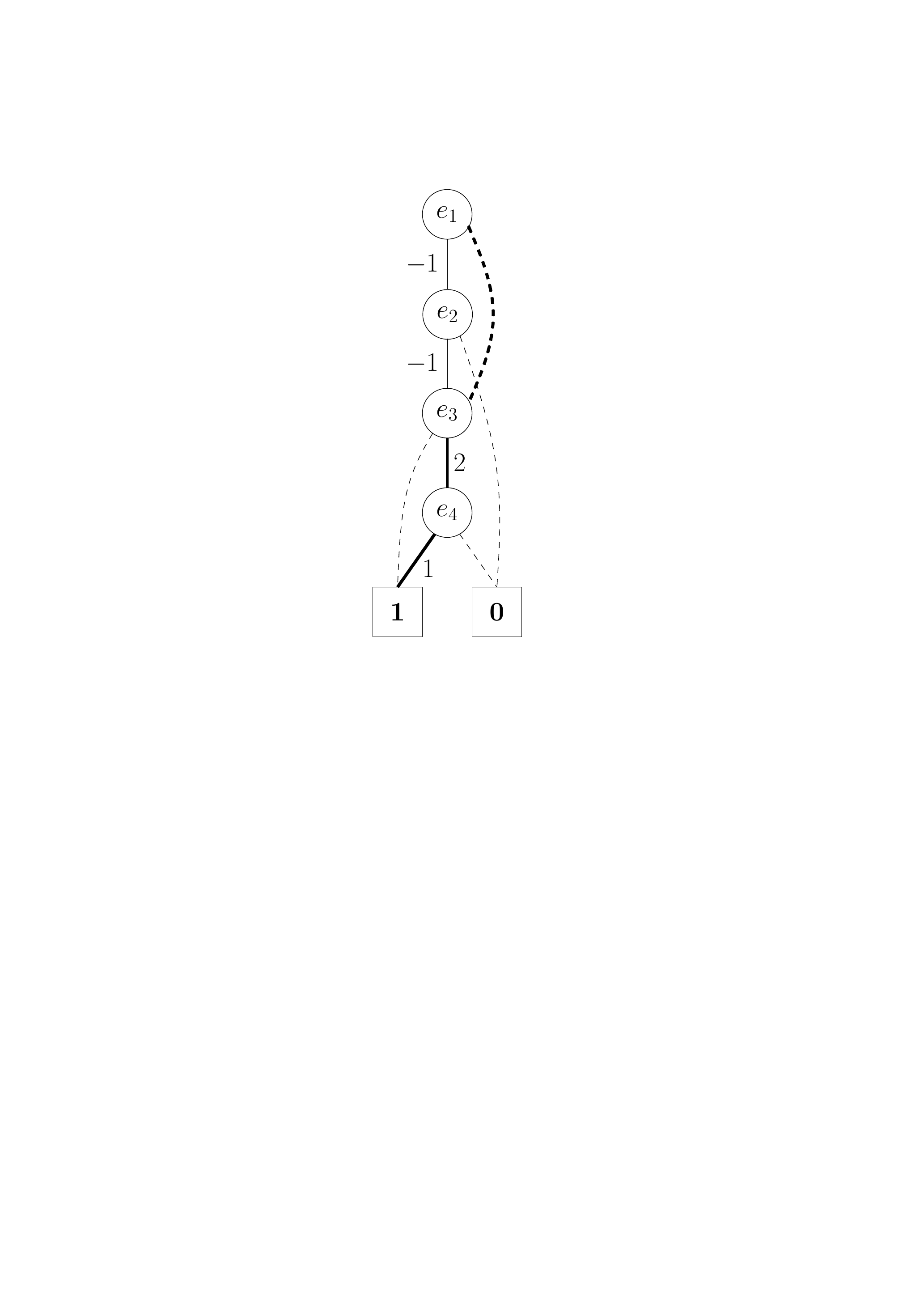}
\caption{\label{fig:zdd_opt} The ZDD from Figure~\ref{fig:zdd1} with weights given by the objective
	function $\max \brac{-e_1 - e_2 + 2e_3 + e_4}$; the bold path corresponds to the maximum-weight
	valid set, that is $\{e_3, e_4\}$.  Weights not shown are $0$.}
\end{fig}

\subsection{The ZDD Restriction Algorithm}\label{sec:restrict}

In order to use standard integer branching methods in a branch-and-price algorithm it is necessary
to solve the constrained pricing problem.  Recall that this problem seeks a new variable of minimum
reduced cost that respects all branching decisions made at the subproblem.  Note that it is
sufficient to generate a new variable that does not appear in the pool $\mathcal{S}'$ for the RMP;
to see this, observe that if any variable in $\mathcal{S}'$ has negative reduced cost, then the
current solution to the RMP is not optimal.  Therefore, in this section, an algorithm is presented
to add restrictions to a ZDD characterizing the pricing problem so that any time a new column is
generated and added to $\mathcal{S}'$, it can be immediately restricted from ever being generated as
a solution to the pricing problem again.  In this way, the ZDD will actually solve the constrained
pricing problem at each iteration of the algorithm.

Let $\mathcal{F}$ be the family of valid solutions to the pricing problem, where each $A \in
\mathcal{F}$ is a subset of $\mathcal{E} = (e_1,e_2,\hdots,e_n)$, and let $\ZF$ be the ZDD
characterizing $\mathcal{F}$.  The restriction algorithm for ZDDs, called \RestrictSet, takes as
input a set $A \in \mathcal{F}$, and builds a new ZDD $\ZFp$ that accepts $\mathcal{F}' =
\mathcal{F} - A$.  The key feature of the ZDD restriction algorithm that makes it effective in
practice is that it operates in $O(n)$ time, and it increases the size of $\ZF$ by at most $n$ nodes
and $2n$ edges (and often by much less).  

Intuitively, the \RestrictSet algorithm identifies the path $P_A$ in $\ZF$ corresponding to the set
$A$, and updates this path so that it ends at the false node instead of the true node.  However, if
there exists $A' \neq A$ such that $P_A$ and $P_{A'}$ overlap, this update could also restrict $A'$.
Therefore, \RestrictSet duplicates the portion of $P_A$ that could overlap with some other
root-to-$\true$ path, and sets the endpoint of the duplicate path to $\false$.  This ensures that no
additional sets are restricted by the algorithm.  The first node on $P_A$ with indegree greater than
one, referred to as the \defn{split node}, is the first node with some potential overlap; thus it,
and all subsequent nodes, are duplicated.

Pseudocode for the \RestrictSet algorithm is given in Algorithm~\ref{alg:rset}; this algorithm makes
use of a function called $\ZF.\insert(i,a_1,a_2)$, which takes as input an index $i \in
\{1,2,\hdots,n\}$ and pointers to two pre-existing nodes $a_1,a_2 \in \ZF$.  The function inserts a
new node into $\ZF$ associated with element $e_i$, with low edge pointing to $a_1$ and high edge
pointing to $a_2$, and returns a pointer to the newly-inserted node.  It also updates the indegrees
of the high and low children.  $\ZF.\insert$ can be implemented in (average) constant time (see
\citealp{Andersen97} for details).

\begin{alg}[t]
    \caption{\label{alg:rset} RestrictSet($\ZF, A$)}
    \Input{A ZDD $\ZF$ and a characteristic vector $(\alpha_1,\alpha_2,\hdots,\alpha_n)$ for a set $A
		\in \mathcal{F}$}
    \Output{A modified ZDD $\ZFp$ such that $\mathcal{F}' = \mathcal{F} - A$}

    \BlankLine
    \Comment{Find the first node on $P_A$ with indegree higher than $1$}
    $a = \textrm{root}(\ZF)$; $b = -1$\;
    \While{$\delta^-(a) < 2$ and $a \nin \{\true, \false\}$}
    {
        $i = \var{a}$; $b = a$\;
        \lIf{$\alpha_i$} { $a = \hi{a}$\; }
        \lElse{ $a = \lo{a}$\; }
    }
    \Comment{The node $c$ is the ``split'' node, and $b$ is its parent}
    $c = a$\;

    \BlankLine
    \Comment{Make copies of all remaining nodes on $P_A$ and point to $\false$}
    $\textrm{list} = ()$\;
    \While{$a \nin \{\true, \false\}$}
    {
        $\textrm{list}.\append(a)$\;
        \lIf{$\alpha_{\var{a}}$} { $a = \hi{a}$\; }
        \lElse{ $a = \lo{a}$\; }
    }

    \BlankLine
    $a' = \false$ \label{rset:zero}\;
	\Comment{Insert the duplicated nodes into $\ZF$}
    \ForEach{$a \in \textrm{list}$ (in reverse order)}
    {
        \lIf{$\alpha_{\var{a}}$} {$a' =\ZF.\insert(\var{a},\lo{a},a')$\label{rset:b_add}\;}
        \lElse { $a' = \ZF.\insert(\var{a}, a', \hi{a})$\label{rset:e_add}\;}
    }

	\BlankLine
	\Comment{Point the correct edge of the parent node $b$ to the root of the duplicated path}
    \lIf{$\alpha_{\var{b}}$}{ $\hi{b} = a'$\label{rset:par1}\; }
    \lElse{ $\lo{b} = a'$\label{rset:par2}\; }
	return $\ZF$\;
\end{alg}

The following theorem establishes the correctness of the \RestrictSet algorithm and proves the
claims made previously about its time and space complexity behavior; an example of the \RestrictSet
applied to the ZDD in Figure~\ref{fig:zdd1} is given in Figure~\ref{fig:zddrestr}.

\begin{theorem}\label{thm:rset}
	Given a ZDD $\ZF$ describing a family of subsets $\mathcal{F}$ of an ordered set $\mathcal{E}$
	with $n$ elements, together with a set $A \in \mathcal{F}$, the \RestrictSet algorithm modifies
	$\ZF$ in $O(n)$ time to produce a new ZDD called $\ZFp$ such that $\mathcal{F}' = \mathcal{F} -
	A$.  Furthermore, $|\ZFp| \leq |\ZF| + 3n$.
\end{theorem}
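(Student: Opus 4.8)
The plan is to prove the three parts of the claim separately. The $O(n)$ running time and the $|\ZFp| \le |\ZF| + 3n$ size bound are short accounting arguments; the real content is that $\ZFp$ characterizes exactly $\mathcal{F} - A$. For the running time: the algorithm consists of three sequential loops; the first two walk down the induced path $P_A$ from the root, and since $\var{\cdot}$ strictly increases along every ZDD edge and ranges over $\{1,\ldots,n+1\}$, that path has at most $n+1$ nodes, so each of these loops runs $O(n)$ times; the third loop runs over $\mathrm{list}$, which holds a subpath of $P_A$ and so has length at most $n$. Every loop body is $O(1)$, counting each call to $\ZF.\insert$ at its stated (average) constant cost, for $O(n)$ overall. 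For the size bound: the third loop makes at most $|\mathrm{list}| \le n$ calls to $\ZF.\insert$ (lines~\ref{rset:b_add}--\ref{rset:e_add}), each adding exactly one node and two edges, and the only other structural change is retargeting one already-existing edge of $b$ (lines~\ref{rset:par1}--\ref{rset:par2}), which adds no edge; hence $|\ZFp| \le |\ZF| + n + 2n = |\ZF| + 3n$.

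To prove correctness, first describe the output explicitly. Let $c$ be the split node, i.e.\ the first node on $P_A$ whose indegree is at least $2$ (or $\true$ if $P_A$ has no such node); let $b$ be its predecessor on $P_A$; and let $d$ be the last internal node of $P_A$ (the node just before $\true$). Then $\ZFp$ is obtained from $\ZF$ by (a) adding a vertex-disjoint chain of fresh nodes $c',\ldots,d'$ in which the copy $v'$ of each $v \in \{c,\ldots,d\}$ carries the same label $\var{v}$, has its edge ``off'' $P_A$ pointing to the same child as $v$, has its edge ``along'' $P_A$ pointing to the next copy, and with $d'$'s $P_A$-edge pointing to $\false$ instead of $\true$; and (b) redirecting $b$'s $P_A$-edge from $c$ to $c'$ (if $c = \true$ the chain is empty and only (b) occurs, sending $b$'s $P_A$-edge to $\false$). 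The original nodes $c,\ldots,d$ remain unchanged, since they may still carry other root-to-$\true$ paths. It remains to show $\ZFp$ accepts exactly $\mathcal{F}-A$.

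The heart of the argument is an invariant propagated through $\ZFp$. For a node $v$ with $\var{v} = j$, let $\mathrm{acc}(v)$ be the family of sets $S \subseteq \{e_j,\ldots,e_n\}$ whose induced subpath starting at $v$ is accepted, so that $\ZF$ accepts $B$ iff $B$ contains no element skipped by the initial long edge above $r$ and $B \in \mathrm{acc}_{\ZF}(r)$. Writing $A_v := A \cap \{e_{\var{v}},\ldots,e_n\}$, the plan is to show by induction that $\mathrm{acc}_{\ZFp}(v') = \mathrm{acc}_{\ZF}(v) \setminus \{A_v\}$ for each copied node, and $\mathrm{acc}_{\ZFp}(v) = \mathrm{acc}_{\ZF}(v) \setminus \{A_v\}$ for each $v$ on the root-to-$b$ portion of $P_A$. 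The base case is $d'$: its single redirected edge removes exactly the tail $A_d$, and nothing else, because $A \in \mathcal{F}$ guarantees that no long-edge gap along $P_A$ rejects $A$. The inductive step climbs the chain $d',\ldots,c'$, crosses the redirected edge at $b$, and then climbs the unchanged prefix $b,\ldots,r$; at each node only the branch lying along $P_A$ has changed, and since the copy preserves $\var{\cdot}$ the long-edge gap conditions on both branches match those in $\ZF$, so exactly the one tail $A_v$ is removed and no other. Evaluating at $r$ gives $\mathrm{acc}_{\ZFp}(r) = \mathrm{acc}_{\ZF}(r) \setminus \{A\}$, hence $\mathcal{F}' = \mathcal{F} - A$; in particular $\ZFp$ rejects $A$ and accepts every other set exactly as $\ZF$ does. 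I expect the gap bookkeeping across the copied chain to be the most error-prone step; the clean conceptual point behind it is that, by its indegree, the split node $c$ is the earliest node at which the induced path of some set other than $A$ can join $P_A$, so copying from $c$ onward leaves every such path untouched, whereas the edge out of $b$ (which has indegree at most one) can safely be rerouted in place.

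One delicate point: if the final edge of $P_A$ is a high edge, the node whose $P_A$-edge is rerouted to $\false$ (namely $d'$, or $b$ when the chain is empty) acquires a high edge to $\false$, which violates the zero-suppressed property as literally stated; a single local contraction --- delete that node and splice its low edge into its parent --- restores the property while leaving both the accepted family and the $3n$ bound intact. Degenerate inputs (for instance a $\ZF$ whose root is already a terminal, so $\mathcal{F} \subseteq \{\emptyset\}$) are handled by direct inspection.
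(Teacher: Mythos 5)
Your proof is correct, and your time and size accounting is the same as the paper's. The correctness argument is organized differently, though: the paper fixes an arbitrary set $A'$ and traces its induced path through $\ZFp$, splitting on whether $A'=A$ and, if not, on whether the first index of disagreement lies below or at/after $\var{c}$; you instead prove a node-level invariant---that the family of accepted tails at each copied node, and at each node on the root-to-$b$ prefix of $P_A$, loses exactly the single tail $A_v$---by induction from the bottom of the duplicated chain upward. Both arguments hinge on the same two facts: the indegree condition defining the split node guarantees that no accepted path other than $P_A$ can use the rerouted edge out of $b$, and the copies preserve $\var{\cdot}$, so the long-edge gap conditions along the duplicated chain match those in $\ZF$. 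Your bottom-up invariant is somewhat more systematic about the long-edge bookkeeping that the paper's case analysis glosses over (for instance, when the first differing index falls inside a gap skipped by $P_A$), and it comes with an explicit description of the output ZDD that the paper leaves implicit. You also handle two points the paper keeps outside the proof proper: the possible violation of the zero-suppressed property when the rerouted $P_A$-edge is a high edge into $\false$, which the paper addresses only in the discussion and figure caption via node suppression, and degenerate inputs whose root is a terminal node. These additions do not change the substance of the result, but they make the argument tighter than the published one.
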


\begin{proof}
	First, note that \RestrictSet visits each node along $P_A$ at most twice, and $P_A$ has at most
	$n$ nodes.  Furthermore, the algorithm performs a constant amount of work for each visited node.
	Thus the running time of \RestrictSet is $O(n)$.  Also, since node $c$ is at most the root of
	$\ZF$, at most $|\mathcal{E}|$ nodes are added to $\ZF$ to form $\ZFp$, and each new node has
	two outgoing edges.

	To prove that $\ZFp$ has the desired properties, let $a_1',a_2',\hdots,a_l'$ be the nodes added
	to $\ZF$ in lines \ref{rset:b_add}-\ref{rset:e_add} (Algorithm~\ref{alg:rset}), in increasing
	order of depth.  Consider some set $A' \subseteq \mathcal{E}$; if $A' = A$, the path from the
	root of $\ZFp$ to the bottom of the ZDD is the same as the path from the root of $\ZF$ up to the
	parent $b$ of the split node $c$.  By construction, the next node visited in $\ZFp$ is $a_1'$
	(lines \ref{rset:par1} and \ref{rset:par2}, Algorithm~\ref{alg:rset}).  Then, the remainder of
	the path in $\ZFp$ follows the added nodes; at each $a_i'$, the high and low children are
	constructed to agree with the values of $A$.  Finally, the last node along this path is $\false$
	(line \ref{rset:zero}, Algorithm~\ref{alg:rset}), so $\ZFp(A') = 0$.

    Furthermore, if $A' \neq A$, then consider the first index $i$ where the characteristic vectors
	of $A$ and $A'$ differ; if $i < \var{c}$, then the modifications to $\ZFp$ have no effect on
	whether $A'$ is accepted, since the only nodes added to $\ZFp$ appear at depths greater than or
	equal to that of $c$.  However, if $i \geq \var{c}$, the path will follow along the newly added
	nodes $a_1',a_2',\hdots,a_j'$, where $\var{a_j'} = i$.  At this point, $A$ and $A'$ differ, and by
	construction, the path returns to the original node in $\ZF$ and never returns to the
	newly-added nodes.  Therefore, $\ZFp(A') = \ZF(A')$, as desired.
\end{proof}

To reduce the size of $\ZFp$, a check can be performed to see if the high and low edges of
newly-inserted nodes both point to $\false$; in this case, the node is suppressed (see
Figure~\ref{fig:zdd_res3}).  Finally, note that the ZDD produced by the \RestrictSet is no longer
necessarily minimal with respect to $\mathcal{F}'$.  In particular, in the worst case, if all $2^n$
subsets of $\mathcal{E}$ are restricted, the size of the ZDD can grow by $O(n2^n)$ nodes.  However,
in this case, the resultant ZDD is $Z_\emptyset$, which can be described with only two nodes.  In
the event that the ZDD becomes too large, a reduction algorithm can be periodically called that
searches for duplicate nodes in $\ZF$ that can be merged.

\begin{fig}
\caption{\label{fig:zddrestr} The result of applying the \texttt{RestrictSet} algorithm to $\ZF$
	from Figure~\ref{fig:zdd1} with $A = \{e_1,e_2,e_3,e_4\}$.  The final ZDD accepts $\mathcal{F}'
	= \{\emptyset, \{e_1,e_2\}, \{e_3,e_4\}\}$.}
\begin{subfigure}[t]{0.28\textwidth}
	\centering
    \includegraphics[width=\textwidth]{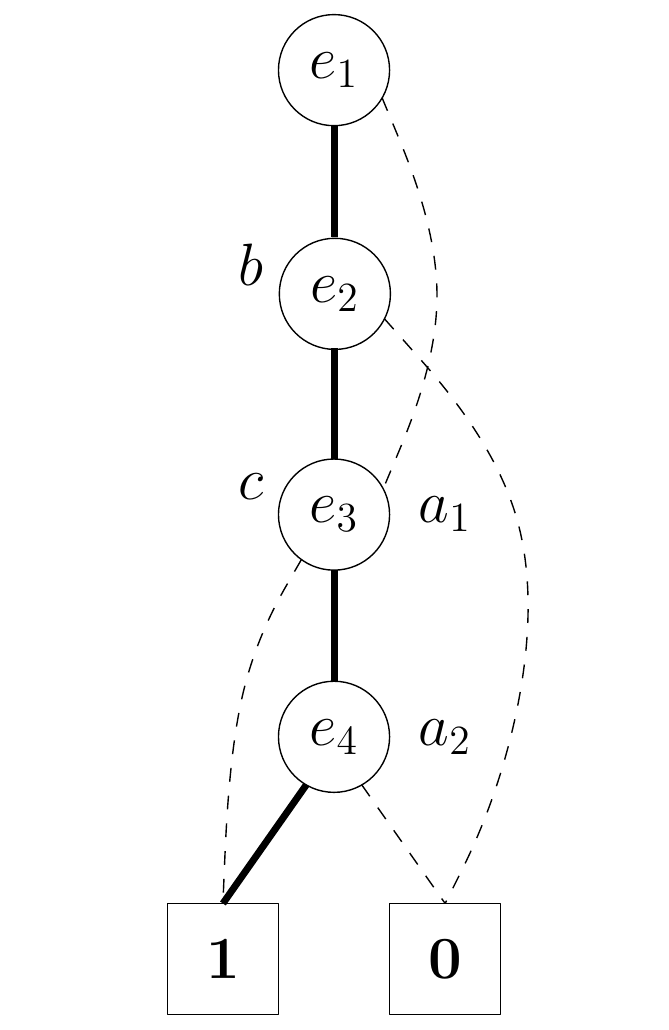}
	\caption{The path $P_A$ is in bold; the split node $c$ is the first node along
		this path with indegree larger than $1$.  The parent of the split node is $b$.}
	\label{fig:zdd_res1}
\end{subfigure}
\qquad
\begin{subfigure}[t]{0.28\textwidth}
	\centering
    \includegraphics[width=\textwidth]{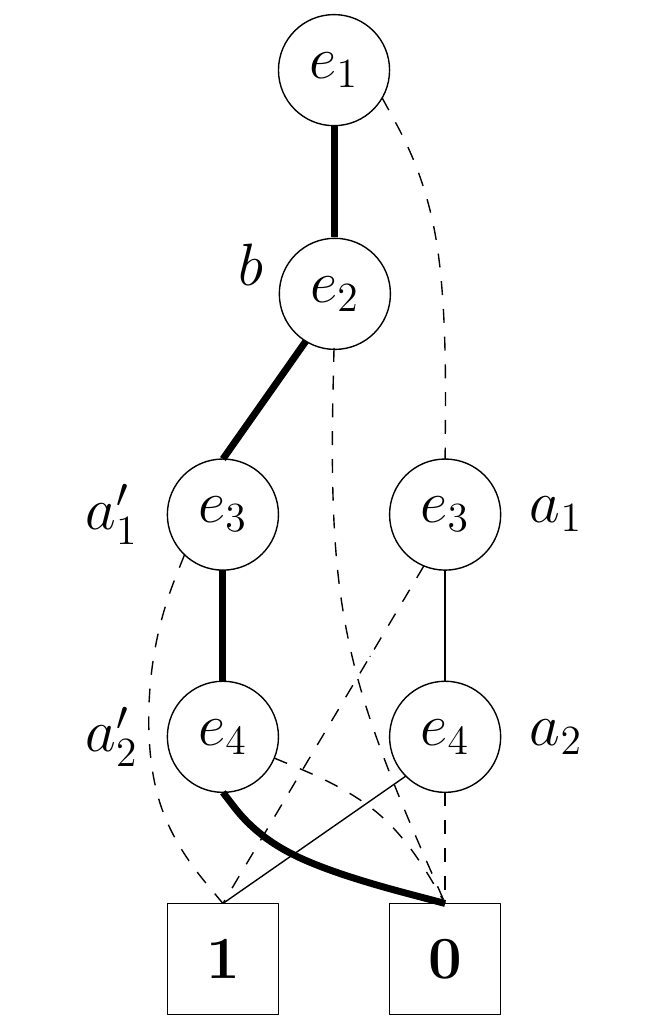}
	\caption{Copies of nodes $a_1$ and $a_2$ are created, and the high edge from $b$ points to this
		new path.  The new path points to $\false$, thus restricting the set $A$.}
	\label{fig:zdd_res2}
\end{subfigure}
\qquad
\begin{subfigure}[t]{0.28\textwidth}
	\centering
    \includegraphics[width=\textwidth]{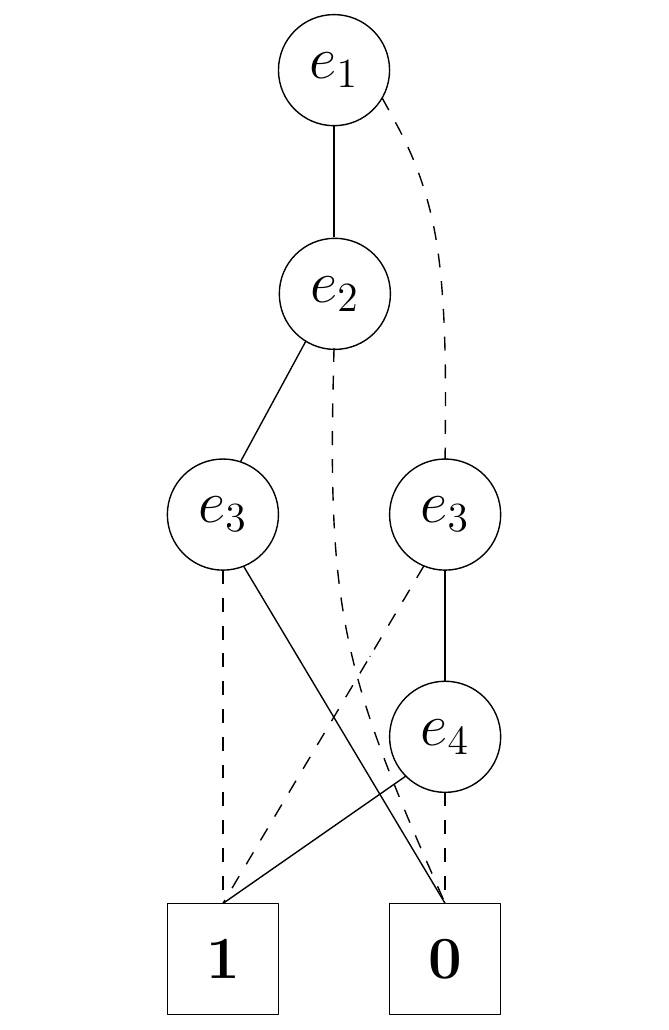}
	\caption{The new ZDD $Z_{\mathcal{F} - A}$; since both high and low edges of $a_2'$ point to
		$\false$, it can be suppressed.  $a_1'$ is also suppressed to satisfy the zero-suppressed
		property.}
	\label{fig:zdd_res3}
\end{subfigure}
\end{fig}

Using the \RestrictSet procedure, a branch-and-price algorithm can be developed that uses
traditional integer branching.  This branch-and-price algorithm first builds a ZDD characterizing
all valid solutions to the pricing problem; in the worst case, this may take exponential time, but
dynamic programming or memoization techniques can be used to speed up the construction.  The ZDD is
then used to produce new variables at every iteration of column generation, which correspond to
solutions of the constrained pricing problem.  Once a new set (or variable) has been generated,
\RestrictSet is called to prohibit that column from being generated again at a later time.  The ZDD
is therefore guaranteed to produce the optimal solution to the pricing problem at each stage, and
since in most cases $n \ll |\ZF|$, the increase in size of the ZDD over the course of the
branch-and-price search is small.  Hence, the time needed to solve the pricing problem does not
significantly increase over the course of the algorithm.  Pseudocode for the resulting
branch-and-price search is given in Algorithm~\ref{alg:bpzdd}.

\begin{alg}
\caption{\label{alg:bpzdd} Branch-and-Price with ZDDs}
Construct $\ZF$, where $\mathcal{F}$ is the set of valid columns\;
Compute an initial pool $\mathcal{S}'$ of columns for the RMP\;
\lForEach{$A \in \mathcal{S}'$}{$\ZF = \RestrictSet(\ZF, A)$\;}
Initialize the branch-and-price search tree $T$ \label{bpzdd:init}\;
\BlankLine
\Comment{Main branch-and-price loop}
\While{$T$ has an unexplored subproblem}
{
	Select a subproblem $s$ that has not been explored \label{bpzdd:select}\;
	Generate children of $s$ according to branching rule\;
	\ForEach{child of $s$}
	{
		\BlankLine
		\Comment{Column generation loop}
		\While{$\exists$ a variable in $\mathcal{S} \setminus \mathcal{S}'$ with negative 
			reduced cost}
		{
			Use $\ZF$ to generate a variable $A \in \mathcal{S} \setminus \mathcal{S}'$ 
				with negative reduced cost\;
			$\RestrictSet(\ZF, A)$\;
			Add $A$ to $\mathcal{S}'$ and re-optimize the RMP\;
		}
		\BlankLine
		Apply pruning rules to delete child, or insert child into $T$\;
	}
	Mark $s$ as explored\;
}
return the best solution found in $T$\;
\end{alg}

\subsection{The Maximal Independent Set ZDD}\label{sec:gczdd}

The graph coloring problem is a classic NP-hard problem \citep{Garey79}; given a graph $G=(V,E)$,
the objective is to find a minimum \defn{proper coloring} of vertices (i.e., a coloring in which no
adjacent vertices share colors).  The \defn{chromatic number} $\chi$ of $G$ is the minimum number of
colors required in any proper coloring.  For a vertex $v$, the \defn{neighborhood} of $v$, denoted
by $N(v)$, is the set of vertices adjacent to $v$.  For a subset $S \subseteq V$, the \defn{induced
subgraph} $G[S]$ is the subgraph of $G$ with vertex set $S$ that has an edge between $u,v \in S$ if
and only if $(u,v)$ is an edge in $G$.  A set $S \subseteq V$ is an \defn{independent set} if $G[S]$
has no edges, and $S$ is a \defn{maximal} independent set if there is no vertex $v \in V \setminus
S$ such that $S + v$ is independent.  For any set of vertices $S$, a vertex $v$ is \defn{covered} if
$v \in S$ or $v \in N(S)$.  Finally, a set $S \subseteq V$ is a \defn{clique} if all pairs of
vertices in $S$ are adjacent.  

The integer programming formulation for graph coloring used in most state-of-the-art solvers, 
initially proposed by \citet{Mehrotra96}, is as follows: 

\begin{equation}\label{eqn:ip2}
\begin{aligned}
\textrm{minimize } & \sum_{S \in \mathcal{S}} x_S\\
\textrm{subject to } & \sum_{S : v \in S} x_S \geq 1 \forall v \in V\\
& x_S \in \{0,1\}.
\end{aligned}
\end{equation}

\noindent
In this formulation, $\mathcal{S}$ is the (exponential) family of maximal independent sets in $G$;
since any proper coloring can be viewed as a partition of $V$ into independent sets, this is
equivalent to searching for the smallest coloring.  The binary variables $x_S$ indicate whether the
maximal independent set $S$ is used in the coloring, and the constraints ensure that each vertex in
the graph appears in some color class.  

The pricing problem for the graph coloring problem as formulated in \pref{eqn:ip2} is a
maximum-weight maximal independent set problem, where the weights on the vertices are given by the
values of the dual variables of the RMP.  If a maximal independent set $S$ with weight larger than
$1$ is found, then variable $x_S$ has negative reduced cost, which means that $x_S$ is a candidate
to improve the solution value of the RMP and can be added to $\mathcal{S}'$.

These solvers often use a branching rule called \defn{edge branching} to avoid destruction of the
pricing problem.  Edge branching selects two non-adjacent, uncolored vertices in $G$ and creates two
branches, one in which the vertices are linked by an edge, and one in which they are merged together
\citep{Mehrotra96}.  However, in this paper, a ZDD is built characterizing the family of maximal
independent sets of $G$, and is used with a standard $0-1$ branching method (called \defn{variable
branching} by \citealp{Malaguti11}).

To build a ZDD for the pricing problem of formulation \pref{eqn:ip2}, fix an ordering
$\{v_1,v_2,\hdots,v_n\}$ on the vertices of the graph, and for some vertex $u \in V$, let $u - 1$ be
the vertex immediately preceding $u$ in this ordering (in this setting, the vertex set $V$ plays the
role of $\mathcal{E}$).  This section describes a recursive algorithm for building the minimal ZDD
(with respect to this vertex ordering) that characterizes $\mathcal{F} = \{A \st \textrm{$A$ is a
maximal independent set in $G$}\}$.  

The maximal independent set ZDD is stored as a lookup table; if $a$ is an index in this table, the
lookup table stores $\var{a}$, $\lo{a}$, $\hi{a}$, and $\delta^-(a)$.  In addition, to facilitate
the merging of isomorphic regions of the ZDD, a reverse lookup table is stored that maps a tuple
$(i, b, c)$ to an index $a$ such that $\var{a} = i$, $\lo{a} = b$, and $\hi{a} = c$, if such an $a$
exists in $\ZF$.  This reverse table is implemented as a hash table which allows for average
constant insertion and lookup time.  

A recursive construction algorithm for $\ZF$, called \MakeIndSetZDD, can be formulated following the
general approach given in \citet{Knuth08}.  At each stage, a set $U$ of $k$ vertices and an index $i
\leq k+1$ is given as input to \MakeIndSetZDD; $U$ is the set of uncovered vertices for some (not
necessarily maximal) independent set $R \subseteq \{v_1,v_2,\hdots,u_i - 1\}$.  Here, vertices
$\{u_i,u_{i+i},\hdots,u_k\}$ can still be added to this (hypothetical) set $R$.  To construct the ZDD
node corresponding to $U$ and $i$, the high child $b_h$ and low child $b_\ell$ must first be
constructed.  To do this, vertex $u_i$ and all its neighbors are removed from $U$ to form a set
$U_H$, and $h$ is set to the index (in $U_H$) of the first vertex appearing after $u_i$ in the
vertex ordering, or $|U_H| + 1$, if no such vertex exists.  Then, to compute $b_h$,
$\MakeIndSetZDD(U_H, h)$ is called; this mimics the addition of vertex $u_i$ to an independent set
$R$ at the current node.  Conversely, to compute the low child, $\MakeIndSetZDD(U,i+1)$ is called,
which forbids vertex $u_i$ from being used in $R$.  To ensure minimality of $\ZF$, before a node
corresponding to some set $U$ and index $i$ is inserted, the algorithm checks to see if any node $a$
exists in the ZDD with $\var{a} = i$, $\hi{a} = b_h$, and $\lo{a} = b_\ell$.  If such a node exists,
the index of that node is returned; otherwise a new node is inserted.  

In the base case, $i = k+1$ (that is, no more vertices can be added to an independent set from the
current node).  If $U$ is empty, all vertices in $G$ are covered by an independent set, so the
algorithm returns $\true$; if $U$ is not empty, there is some uncovered vertex in $U$, so the
algorithm returns $\false$ (in fact, a tighter base case can be developed by observing that if there
is some vertex in $\{u_1,u_2,\hdots,u_{i-1}\}$ that is not adjacent to any vertex in
$\{u_i,u_{i+1},\hdots,u_k\}$, it is impossible to build a maximal independent set from the current ZDD
node).  

Pseudocode for \MakeIndSetZDD is given in Algorithm~\ref{alg:mkzdd}; to build $\ZF$,
$\MakeIndSetZDD(V, v_1)$ is called.  An example of running \MakeIndSetZDD on the graph in
Figure~\ref{fig:ex} is given in Figure~\ref{fig:steps}.

\begin{alg}
\caption{\label{alg:mkzdd} MakeIndSetZDD$(U, i)$}
\Input{A set $U = \{u_1,u_2,\hdots,u_k\}$ of uncovered vertices such that $u_j < u_{j+1}$ with respect
    to the vertex ordering on $V$, and a ``current index'' $i$ }
\Output{The root node of a ZDD characterizing all the maximal independent sets in $G[U]$ that can be
    formed with vertices in $\{u_i,u_{i+1},\hdots,u_k\}$}
\lIf{$G[U]$ cannot be covered by taking all vertices in $\{u_i,u_{i+1},\hdots,u_k\}$}
    {return $\false$\label{rec:base1}\;}
\lIf{$U == \emptyset$}{return $\true$\label{rec:base2}\;}
\BlankLine
$U_H = U - u_i - N(u_i)$ \Comment{Use vertex $u_i$; remove it and its neighbors from $U$}
$h = \min\{j \st u_j > u_i \textrm{ and } u_j \in U_H\}$ or $|U_H|+1$ if no such $j$ exists\;
$b_h = \MakeIndSetZDD(U_H, h)$\;
$b_\ell = \MakeIndSetZDD(U, i+1)$\;
\BlankLine
\lIf{$b_h == \false$}{return $b_l$\label{rec:zero}\;}
\Comment{Use reverse lookup table}
\lIf{$\exists a \in \ZF \textrm{ s.t. } \var{a} = i, \lo{a} = b_\ell, \textrm{ and } \hi{a} = b_h$}
    {return $a$\label{rec:hash}\;}
\lElse{return $\ZF.\texttt{insert}(i, b_\ell, b_h)$\;}
\end{alg}

\begin{fig}
\includegraphics[scale=0.5]{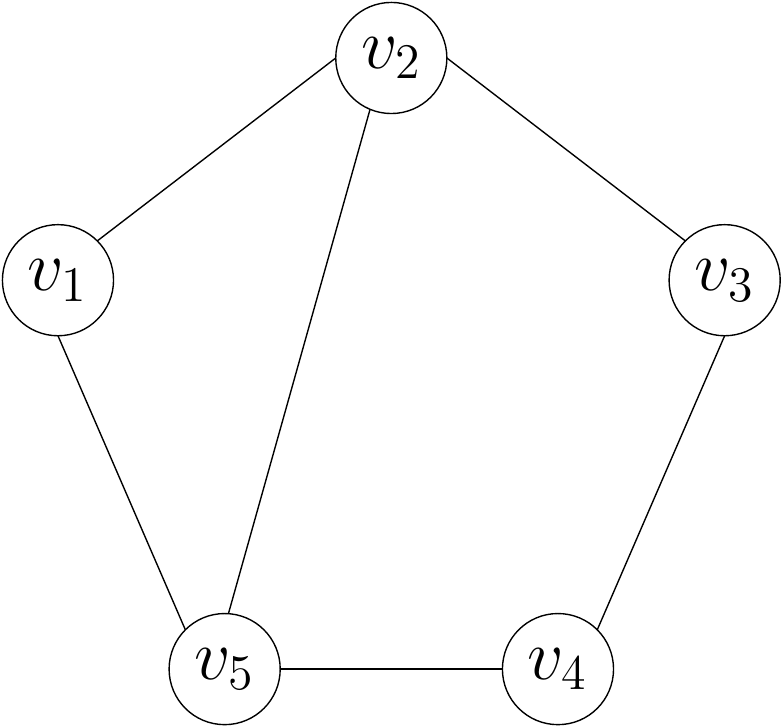}
\caption{\label{fig:ex} An example graph with a vertex ordering.}
\end{fig}

Note that \MakeIndSetZDD does not actually maintain the vertices that are used in a current
independent set $R$ during the ZDD construction.  It is sufficient to maintain a list of vertices
that are left uncovered by {\em some} independent set, since many independent sets may yield the
same set of uncovered vertices.  

\begin{fig}[p]
\caption{\label{fig:steps} A visualization of the steps taken by \texttt{MakeIndSetZDD} to build
$\ZF$ for the example graph given in Figure~\ref{fig:ex}.  Grey nodes and edges have been visited by
a recursive call, but are not yet stored in the ZDD.  Black nodes and edges have been stored in the
ZDD's lookup table.  Bold elements have been inserted in the most recent step of the algorithm;
nodes are labeled in order of insertion.  The set listed to the right of each node is the set $U$
for that recursive call; the notation $[n]$ denotes the set $\{1,2,\hdots,n\}$.}
\begin{subfigure}[t]{0.31\textwidth}
	\centering
    \includegraphics[scale=0.5]{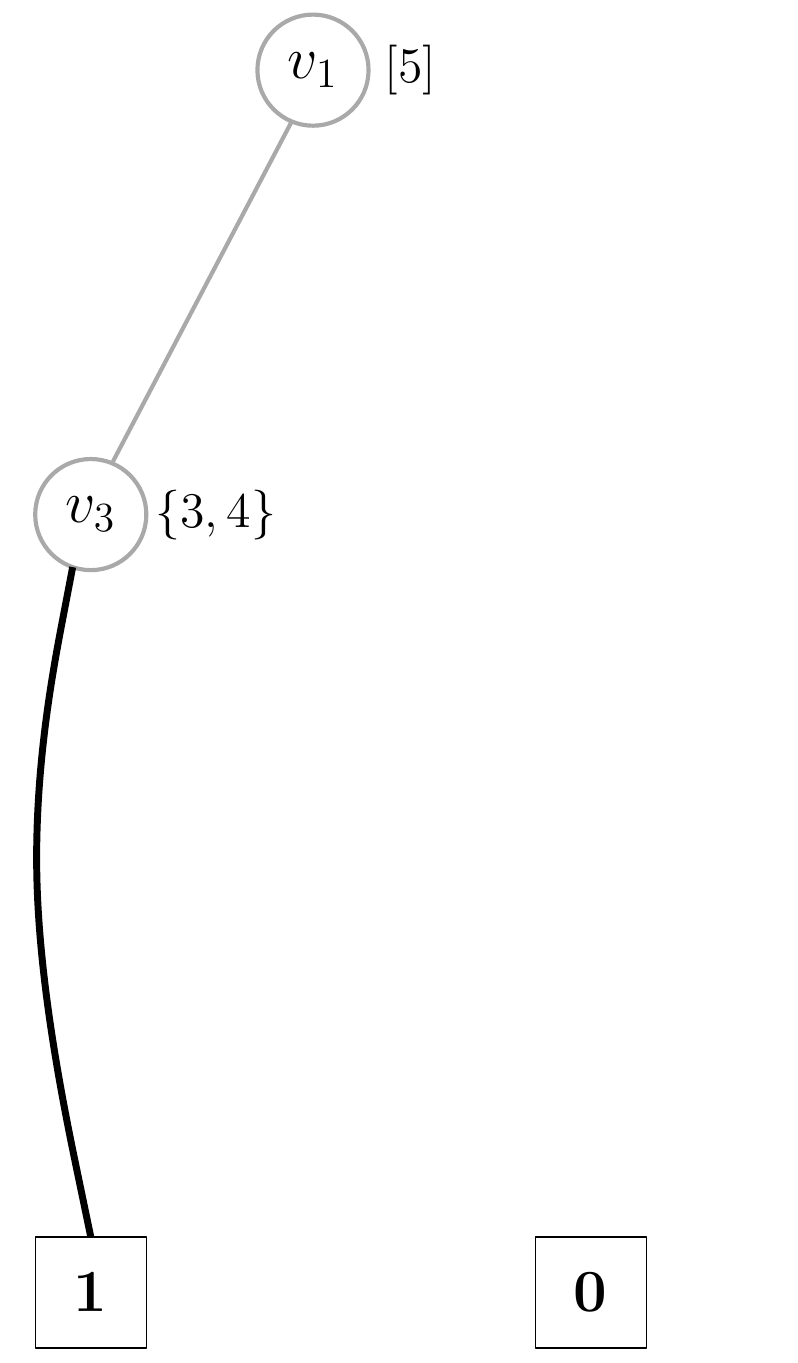}
	\caption{Two recursive calls are made; using vertices $v_1$ and $v_3$ leaves $U$ empty, so
		$\true$ is returned.  No nodes have been inserted into the ZDD at this point.}
	\label{fig:mkzdd1}
\end{subfigure}
\quad
\begin{subfigure}[t]{0.31\textwidth}
	\centering
    \includegraphics[scale=0.5]{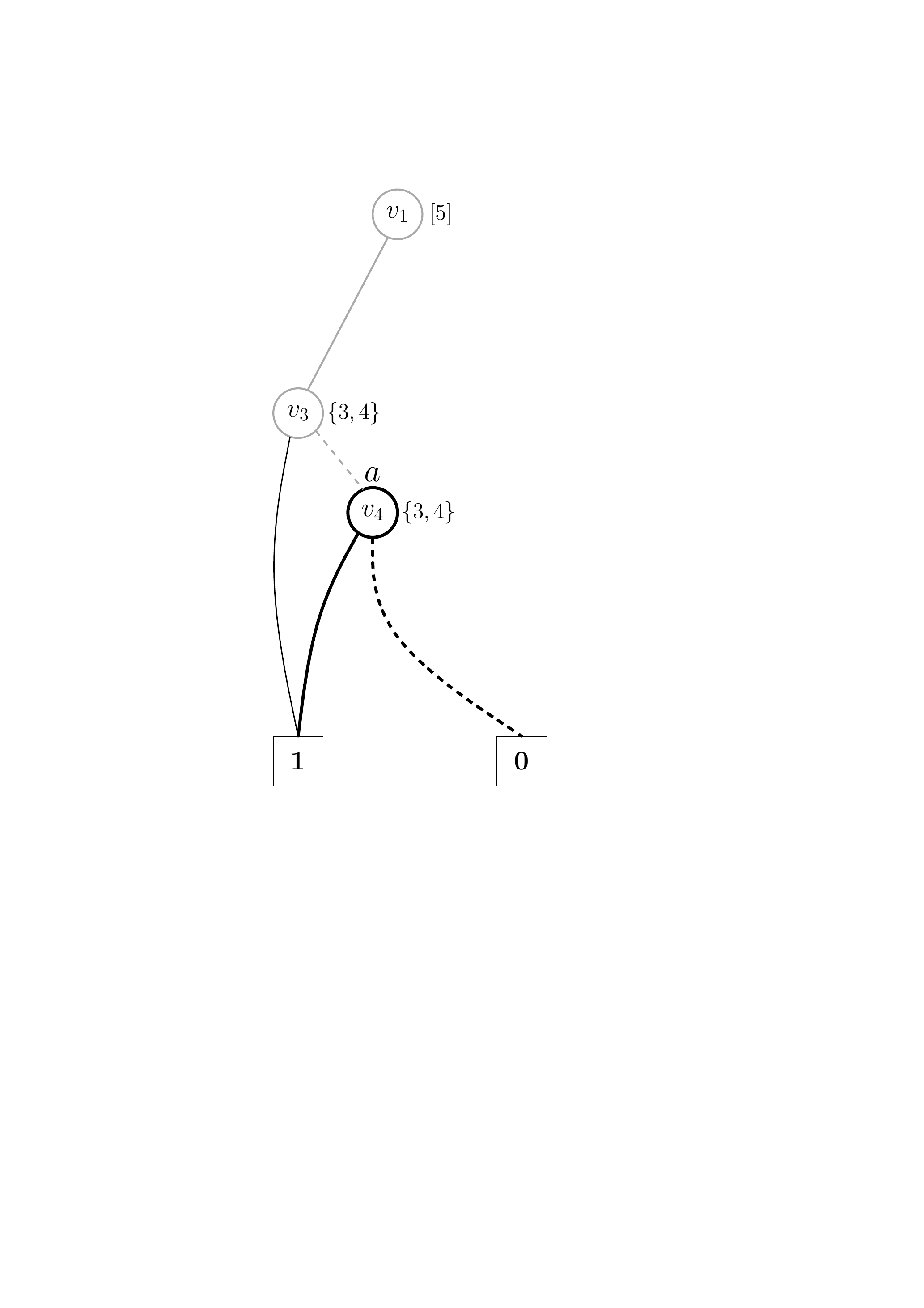}
	\caption{If $v_1$ is used and $v_3$ is not used, $v_4$ must be used.  Node $a$ is the first node
		inserted in the ZDD.}
	\label{fig:mkzdd2}
\end{subfigure}
\quad
\begin{subfigure}[t]{0.31\textwidth}
	\centering
    \includegraphics[scale=0.5]{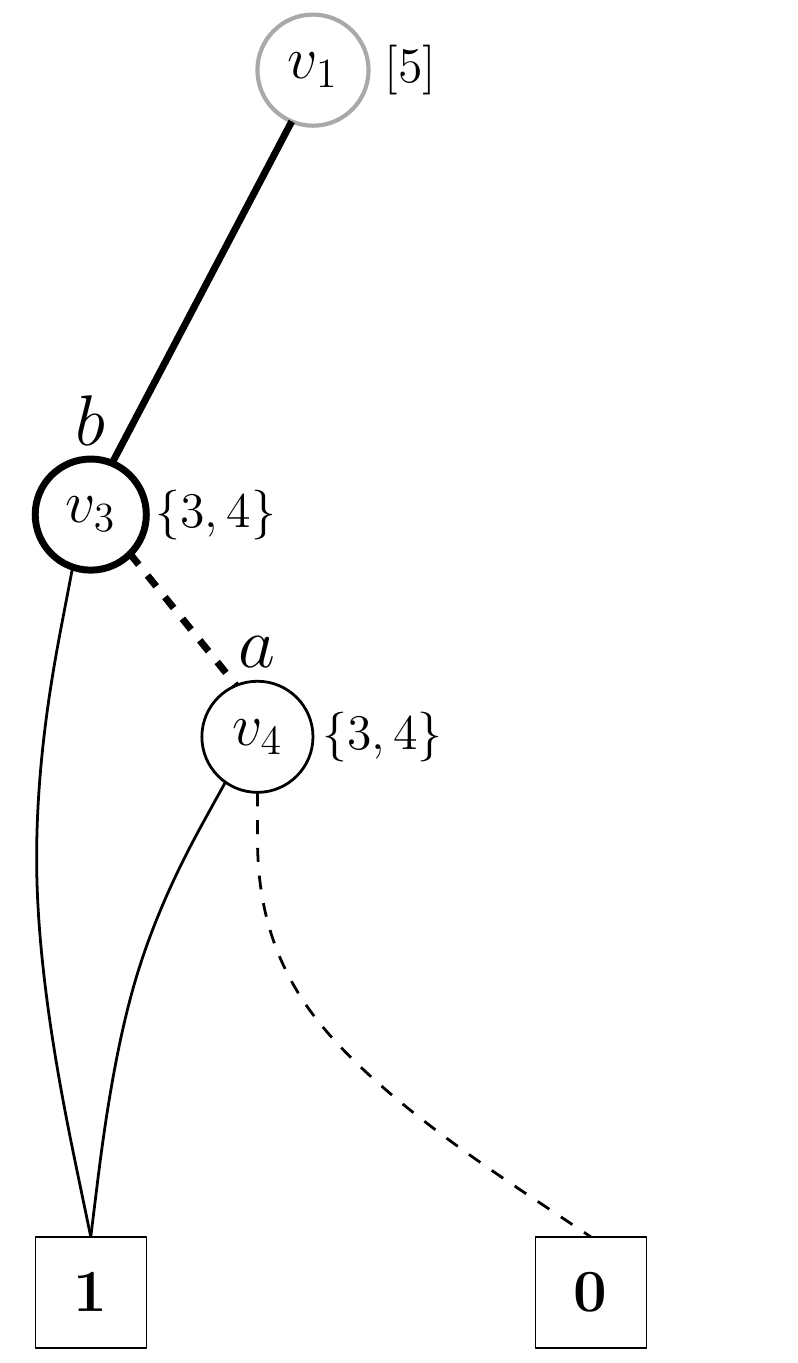}
	\caption{Both children of $b$, the high branch of the root, have been computed, so $b$
		is inserted into the ZDD.}
	\label{fig:mkzdd3}
\end{subfigure}
\\
\begin{subfigure}[t]{0.31\textwidth}
	\centering
    \includegraphics[scale=0.5]{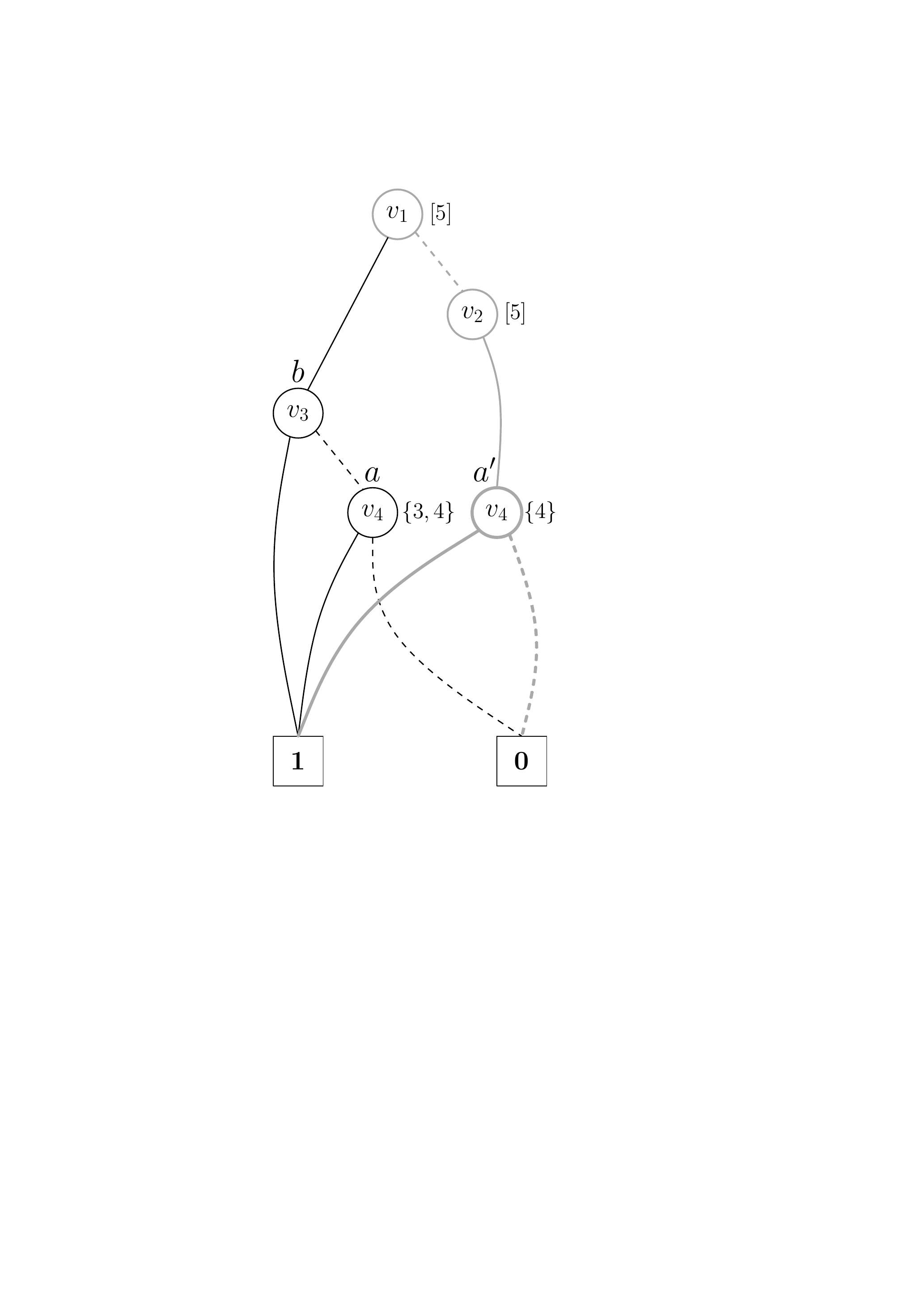}
	\caption{If $v_1$ is not used in an independent set, and $v_2$ is, $v_4$ must also be used to
		ensure maximality.  Node $a'$ is computed as the high child, but is not inserted because it
		is a duplicate of node $a$.} 
	\label{fig:mkzdd4}
\end{subfigure}
\quad
\begin{subfigure}[t]{0.31\textwidth}
	\centering
    \includegraphics[scale=0.5]{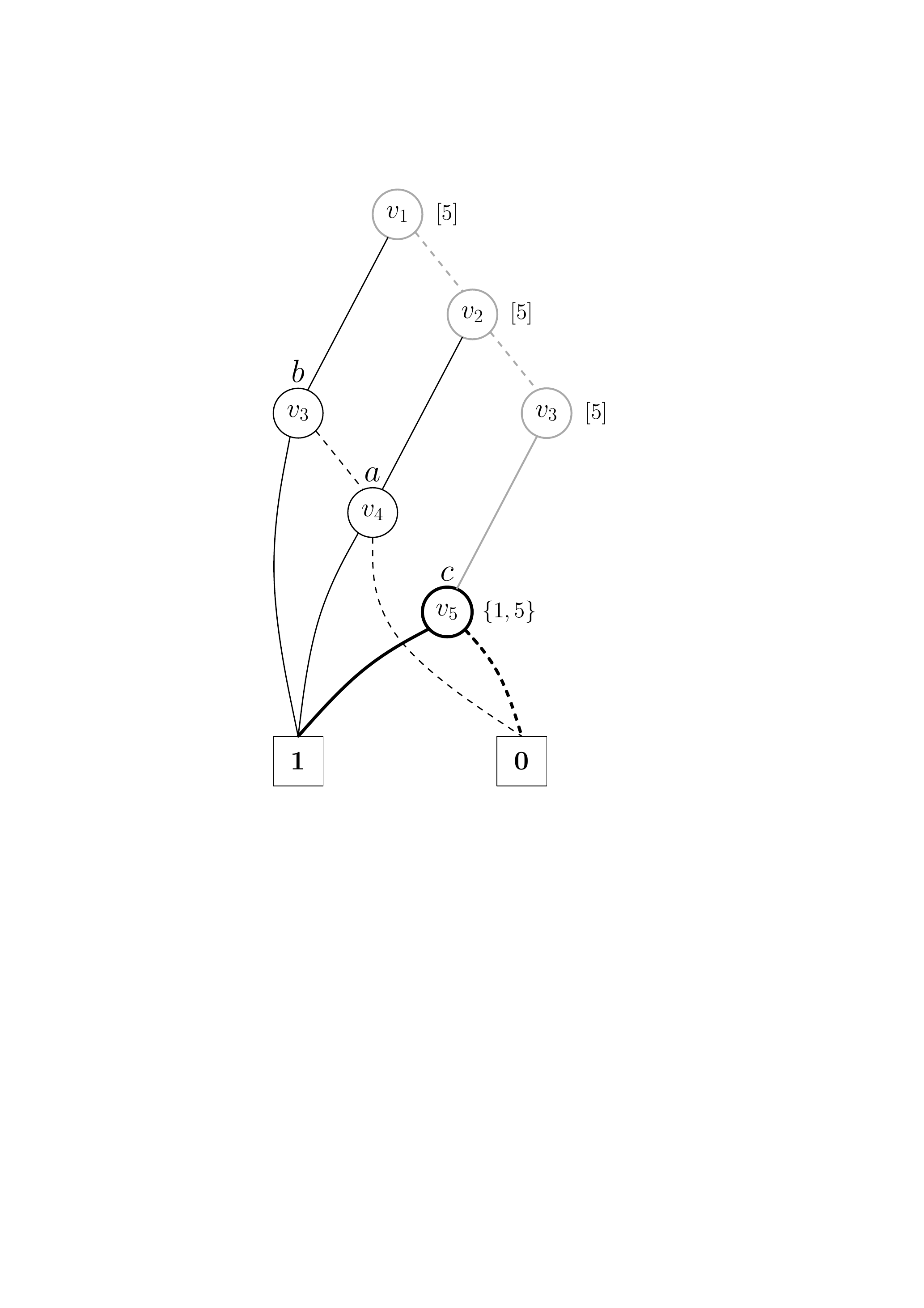}
	\caption{If $v_3$ is the first vertex used in an independent set, $v_5$ must also be used to
		ensure maximality.}
	\label{fig:mkzdd5}
\end{subfigure}
\quad
\begin{subfigure}[t]{0.31\textwidth}
	\centering
    \includegraphics[scale=0.5]{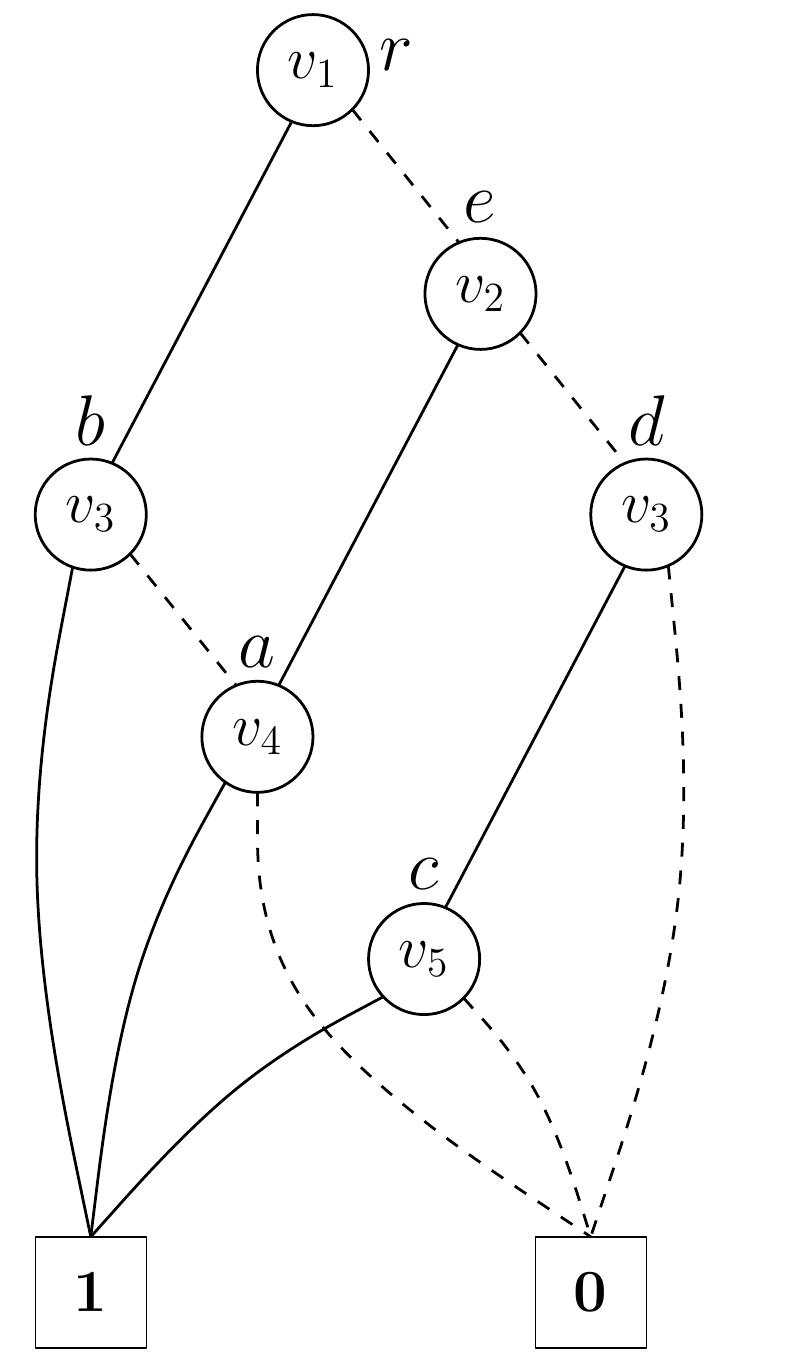}
	\caption{Some vertex in $\{v_1,v_2,v_3\}$ must be used in any maximal independent set of $G$, so
		$\lo{d} = \false$.  All branches are now complete and the algorithm terminates.}
	\label{fig:mkzdd6}
\end{subfigure}
\end{fig}

A number of different ordering heuristics can be applied to the vertex set of $G$ to derive ZDDs of
varying size.  The rule that was empirically found to produce the smallest ZDDs is the maximal path
decomposition rule, which computes a set of paths $P_1,P_2,\hdots,P_q$ such that $P_i$ is maximal in
$G[V - \bigcup_{j=1}^{i-1} P_j]$.  The vertices are then ordered as 
\[v_1^1,v_2^1,\hdots,v_{l_1}^1,v_1^2,v_2^2,\hdots,v_{l_2}^2,v_1^q,v_2^q,\hdots,v_{l_q}^q,\]
where $v_i^j$ is the $\ith$ vertex along the path $P_j$, and $l_j$ is the length of path $P_j$.
\citet{Morrison14zdd} show that the number of nodes associated with the $\kth$ vertex in this
ordering is bounded by the $\kth$ Fibonacci number $F_k$.

\section{Cyclic Best-First Search}\label{sec:cbfs}

As described in Section~\ref{sec:intro}, when using standard integer branching in a branch-and-price
setting, the structure of the search tree can become extremely unbalanced.  In particular, long
chains of assignments that make no progress towards a solution exist, which (if explored) can
dramatically increase the search time.  Moreover, in many cases these long chains appear more
promising than shorter chains which progress towards a solution.  For instance, in a problem with
many covering constraints of the form $\sum_i x_i \geq b$, setting a variable $x_i = 0$ generally
does not change the lower bound much, nor does it restrict the solution considerably since many
other unfixed variables could also satisfy the constraint.

Therefore, standard search strategies such as depth-first search (DFS) or best-first search (BFS) do
not perform particularly well in this setting.  If DFS gets unlucky, it can start exploring some
long chain early in the process which does not improve the incumbent solution but requires a large
amount of search time.  On the other hand, a strategy like BFS which relies on the lower bound to
perform node selection will also perform poorly, since the lower bounds along the long branches will
often be smaller than lower bounds in other parts of the tree.

Historically, the iterative deepening depth-first search (IDFS) strategy \citep{Korf85} has been
used in such settings to prevent the exploration of long chains that do not make progress towards
better incumbents.  However, this paper proposes the use of the cyclic best-first search (CBFS)
strategy as an alternative search strategy that enables the use of lower bound information during
subproblem selection.  The CBFS strategy, originally proposed by \citet{Kao09} (and called
distributed best-first search in their paper), has since been used successfully in a number of
additional settings including two different scheduling problems \citep{Morrison14salb,Sewell12sdst}.

This search strategy can be thought of as a hybrid algorithm between DFS and BFS; the algorithm uses
a measure-of-best function $\mu$ to select the next subproblem to explore (as in BFS), but
repeatedly cycles through a set of labeled \defn{contours} (i.e., a collection of subproblems),
selecting one subproblem from each contour to explore before advancing to the next contour.  The
cyclic behavior can be thought of as a variant of backtracking in DFS.  The contour labels are
simply taken from $\No$ as a way to order the set of contours.  For example, the levels of the search
tree provide a natural contour definition, where subproblems are grouped by their distance from the
root of the tree (see Figure~\ref{fig:contour1}).  

\begin{alg}[b]
\caption{CBFS \label{alg:cbfs}}
Let $i$ be the label of the contour containing the last explored subproblem\;
\If{$\exists C_j \neq \emptyset$ with $j > i$}{Let $j$ be the first index larger than $i$ of a
	non-empty contour\;}
\Else{Let $j$ be the first index in $\{0,1,\hdots,i\}$ of a non-empty contour\;}
return $s \in \argmin_{s' \in C_j} \mu(s')$\;
\end{alg}

Let $C_i$ denote the contour with label $i$; Once a subproblem has been explored from $C_i$, CBFS
chooses the next subproblem for exploration from $C_{i + p}$, where $p = \min \{p' \in \Z^+ \st C_{i
+ p'} \neq \emptyset\}$, and index addition is done modulo $K$ (the largest contour label currently
in use).  The subproblem chosen from this contour is one that minimizes the measure-of-best function
$\mu$.  In contrast, BFS always chooses the subproblem with the best (global) value of $\mu$ to
explore.  Pseudocode for the CBFS strategy is given in Algorithm~\ref{alg:cbfs}; this code is called
in Line~\ref{bpzdd:select} of Algorithm~\ref{alg:bpzdd} to select a new subproblem for exploration.

Previous implementations of CBFS have only used the depth-based contour definition; however, this
contour definition does not produce better performance than DFS or BFS, for the same reason that BFS
performs poorly.  In fact, CBFS can produce worse performance than DFS in some instances of the
graph coloring problem, for instance if DFS gets lucky and finds a good incumbent early.  The key
insight provided here is that other, more complicated, contour definitions are possible which may
dramatically improve the search process for branch-and-price algorithms for graph coloring.

In particular, consider the following contour definition (called the \defn{positive assignment}
definition), which assigns a subproblem to contour $\ell$ if and only if there have been $\ell$
branching decisions made of the form $x_i = 1$ (called a positive assignment).  Using this contour
definition significantly restructures the order in which subproblems are selected for exploration
(see Figure~\ref{fig:contour2}).  In particular, using this definition prohibits the immediate
exploration of a child subproblem which assigns $x_i = 0$, even if the lower bound at this child is
better than the lower bound at the $x_i = 1$ child.  In this way, the search is weighted towards
exploration of subproblems that make positive assignments, which serves to counterbalance the
effects of a lopsided search tree.  In essence, the positive assignments can be thought of as
discrepancies in limited discrepancy search \citep{Harvey95,Korf96}: most of the maximal independent
sets in $G$ will not be used, but a few, the discrepancies, will be.

\begin{fig}[t]
\caption{\label{fig:contours} Subproblems in a search tree with two different contour functions.
Dashed lines indicate an assignment of $0$ to the variable at that subproblem.  The number in the
center of each node is the label of the contour that subproblem is assigned to.}
\begin{subfigure}[t]{0.48\textwidth}
	\includegraphics[scale=0.4]{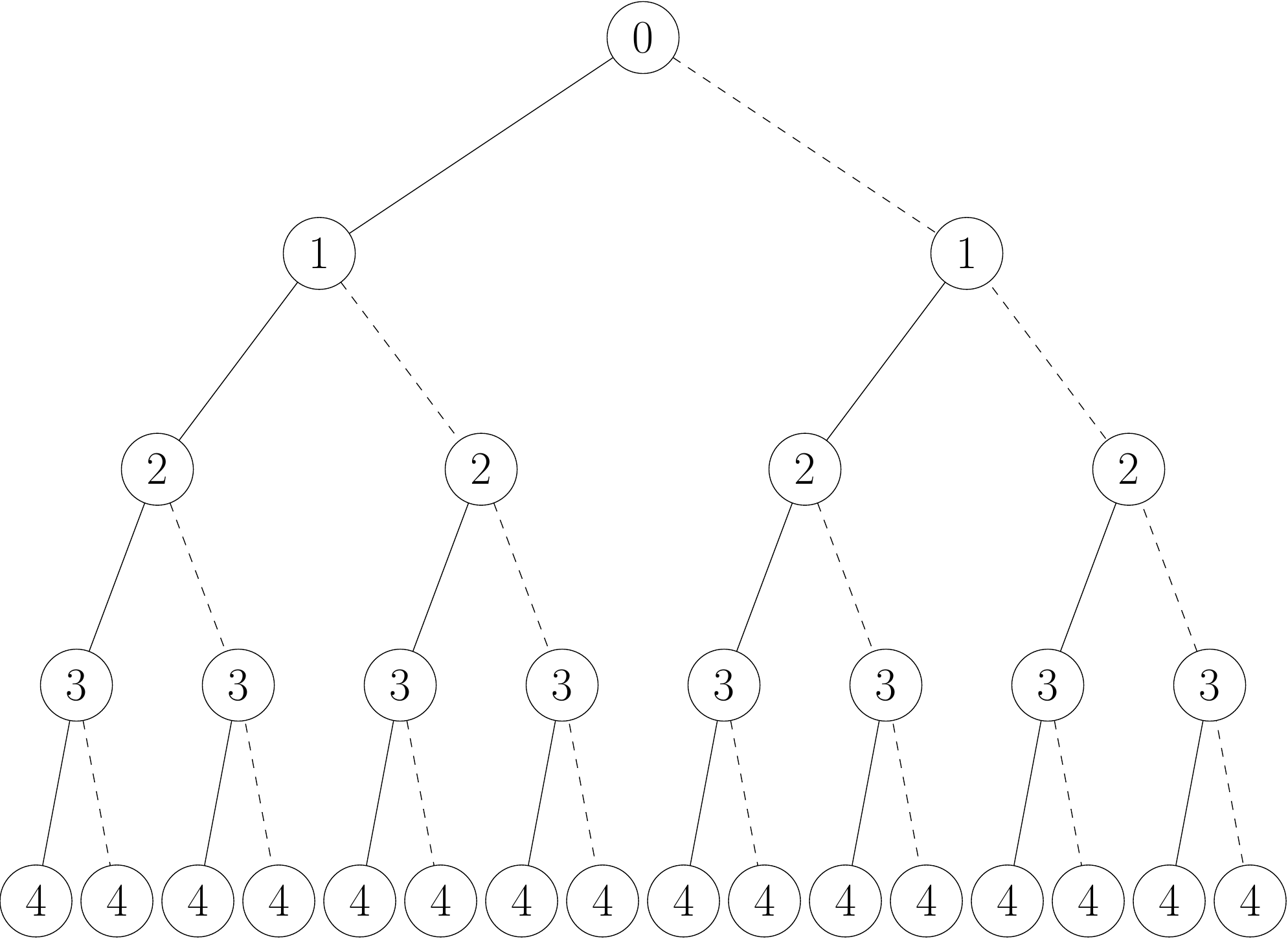}
	\caption{The depth contour definition}
	\label{fig:contour1}
\end{subfigure}
\quad
\begin{subfigure}[t]{0.48\textwidth}
	\includegraphics[scale=0.4]{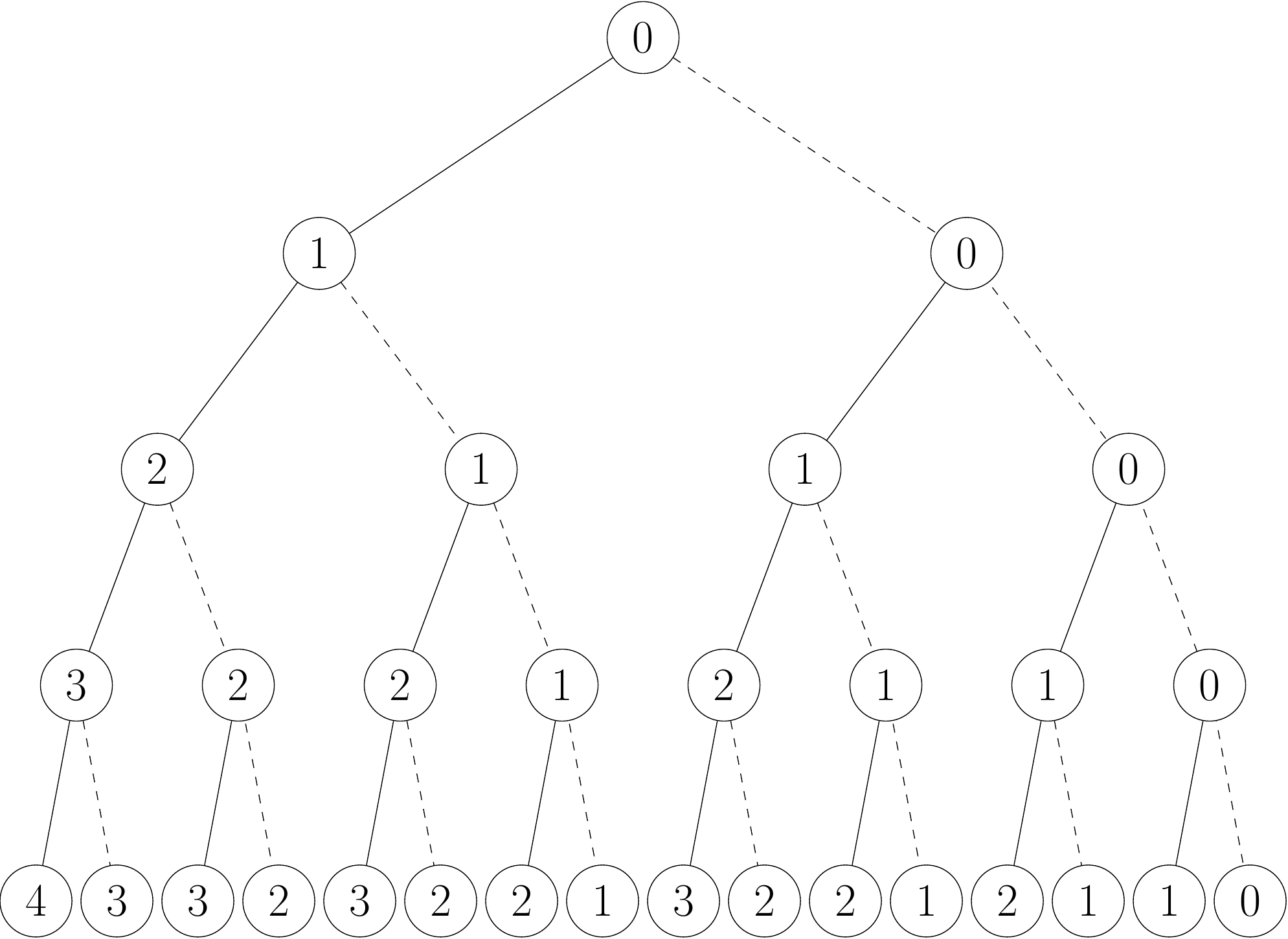}
	\caption{The positive assignment contour definition}
	\label{fig:contour2}
\end{subfigure}
\end{fig}

\section{Computational Results}\label{sec:compres}

A branch-and-price algorithm for the graph coloring problem was implemented using a ZDD to solve the
pricing problem, together with the CBFS strategy for subproblem exploration, and computational
experiments were run on a subset of the instances from the DIMACS graph coloring challenge
\citep{Johnson96, Trick05}.  This section describes some implementation details for this program,
called \BPZ, and discusses the results of these experiments and a comparison to the best algorithms
in the literature.

\subsection{Initialization and Preprocessing}\label{sec:init}

To reduce the size of problem instances, \BPZ uses a standard preprocessing technique: a search is
done to find a large clique $C$ in the graph, and any vertex $v \in V$ with degree less than
$|C|$ is removed.  Since a valid coloring for $G$ must use at least $|C|$ colors, at least one color
exists in any proper coloring that is not assigned to any neighbor of $v$; thus, any proper coloring
of $G - v$ can be extended to $G$ without increasing the number of colors used \citep{Mendez06}.
A branch-and-bound search is employed in a heuristic manner to find an initial large clique.  The
clique $C$ can also be used to prove optimality -- if a proper coloring of $G$ is found that uses
exactly $|C|$ colors, this coloring must be optimal.

To initialize \BPZ, a starting pool of independent sets needs to be generated.  A modified version
of the initialization procedure described in \citet{Malaguti08} is used for this purpose.  Their
algorithm employs a 2-phase approach to find good initial solutions.  In the first phase, a genetic
algorithm combined with a local search rule searches for valid $k$-colorings of the graph for some
input parameter $k$.  If a valid $k$-coloring is found, then the procedure is iteratively called
with successively smaller values of $k$ until a user-specified time limit is reached.  The second
phase takes the best solution found in phase 1 and applies a covering heuristic to improve the
solution further.  \BPZ uses a similar procedure to generate its initial pool of independent sets
for the RMP, which only runs the first phase of the algorithm described by \citet{Malaguti08}.

Any column generated by the initialization routine can be added to the initial pool $\mathcal{S}'$
for the RMP.  However, the initialization procedure often generates a large number of sets; thus, it
is necessary to reduce the size of the initial pool.  To this end, the RMP is solved once with only
the sets used by the best available coloring to get initial dual prices.  Only the generated sets
with a price above $0.8$ are included in $\mathcal{S}'$.  This rule includes all sets with negative
or close-to-negative reduced cost in $\mathcal{S}'$, since these sets are more likely to improve
upon the LP solution to the RMP in early stages of the search.

\subsection{Results from the DIMACS Database}\label{sec:dimacs}

\BPZ was implemented in C++ and used CPLEX 12.5 with default settings to solve the RMP; all
computational experiments described in this section were performed on a desktop machine with an
Intel Core i7-930 2.8GHz quad-core processor and 12 GB of available memory.  The branch-and-price
algorithm utilized only a single processor core; however, CPLEX operates in parallel by default.
All times reported here are aggregated over all cores.  For the sake of comparison with the results
obtained with the MMT algorithm, the {\em dfmax} benchmark program was run on the r500.5 instance
provided by \citet{Trick05}.  The computer used for these experiments took 6.60s CPU time to solve
this benchmark instance.

\looseness=-1
Comparisons were made against four different branch-and-price algorithms available in the
literature.  First, \citet{Malaguti11} give an exact algorithm for the graph coloring problem that
uses an improved initialization heuristic, together with extensive computational results.  The
compared results were obtained using standard $0-1$ branching instead of edge branching.  Secondly,
\citet{Gualandi12} describe a branch-and-price solver for graph coloring that uses constraint
programming techniques to solve the pricing problem; their implementation uses the edge branching
rule.  \citet{Morrison14wide} provide extensive computational results using a wide branching
technique, which modifies the standard $0-1$ branching rule to allow multiple children to be
generated from a subproblem in the search tree.  Finally, \citet{Held12} provide a method for
computing a numerically safe lower bound for graph coloring, which they embed inside a
branch-and-price solver.  Using this algorithm, they are able to prove new lower bounds for a number
of unsolved instances.  

\begin{tab}
\begin{tabular}{r|p{11cm}}
Instance & Name of the tested instance\\
$n$ & Number of vertices in the instance\\
$m$ & Number of edges in the instance\\
$\chi$ & Chromatic number of the instance, if known\\
$\LB$ & Lower bound found by \BPZ\\
$\UB$ & Best solution found by \BPZ (if $t < 10hrs$, this value is optimal)\\
$t_Z$ & Time needed to construct the maximal independent set ZDD\\
$t$ & Time spent in the branch-and-price phase of the algorithm\\
$t_{\textrm{MMT-init}}$ & Total initialization time used by \citet{Malaguti11}\\
$t_{\textrm{MMT}}$ & Adjusted time to verify optimality by \citet{Malaguti11}\\
$t_{\textrm{wide}}$ & Adjusted time to verify optimality by the wide branching solver of
	\citet{Morrison14wide}\\ 
$t_{\textrm{CP-BnP}}$ & Adjusted time to verify optimality by the branch-and-price solver of
	\citet{Gualandi12}\\ 
exp & Number of nodes explored in the search tree\\
id & Number of nodes identified in the search tree\\
$Z_i$ & Initial size of the ZDD\\
$Z_f$ & Final size of the ZDD\\
\% change & Percent change in size of the ZDD over the course of the algorithm\\
$|\textrm{col}|$ & Number of columns generated over the course of the algorithm\\
$t_{\textrm{price}}$ & Time spent solving all pricing problems over the course of the
	algorithm\\
\end{tabular}
\caption{\label{tab:cols} Notation used for computational results data (Tables~\ref{tab:zdd}
	and~\ref{tab:zdd2}).}
\end{tab}

\looseness=-1
Experiments were run on 40 instances from the DIMACS instance database \citep{Trick05}.  Experiments
were not run on easy instances (those for which the lower bound at the root is sufficient to prove
optimality), since these instances do not demonstrate the effectiveness of the ZDD data structure
for solving the pricing problem in the presence of branching constraints.  The remaining instances
were chosen to span a range of difficulty, including ones that are easily solved to optimality by
all algorithms in the literature, and others for which no algorithm has yet been able to verify
optimality.  In addition, experiments were run on 7 additional instances taken from
\citet{Gualandi12}.

\begin{table}
\scriptsize
\begin{tabular}{lrrr|rrrrrrrr}
Instance & $n$  &  $m$  &  $\chi$  &  \LB  &  \UB  &  $t_Z$  &  $t$ &$T_{\textrm{MMT-init}}$& $t_{\textrm{MMT}}$ & $t_{\textrm{wide}}$ & $t_{\textrm{CP-BnP}}$\\\hline
\texttt{DSJC125.5}  &125&3891&17&16&17&0.47&284.92&6100&17019.33&\cellcolor[gray]{0.8} 225.21&14372.75\\
\texttt{DSJC125.9}  &125&6961&44&43&44&\cellcolor[gray]{0.8}0&\cellcolor[gray]{0.8}0.21&6100&3674.22&1.02&33.23\\
\texttt{DSJC250.5}  &250&15668&  ?  &26&28&29.55& \textgreater10hrs &6100& \textgreater10hrs & \textgreater10hrs & \textgreater10hrs\\
\texttt{DSJC250.9}  &250&27897&72&71&72&\cellcolor[gray]{0.8}0.04&\cellcolor[gray]{0.8}649.90&6100& \textgreater10hrs & \textgreater10hrs & \textgreater10hrs\\
\texttt{DSJC500.5}  &500&62624&  ?  &43&53&3458.5& \textgreater10hrs &6100& \textgreater10hrs & \textgreater10hrs & -\\
\texttt{DSJC500.9}  &500&112437&  ?  &123&129&0.38& \textgreater10hrs &6100& \textgreater10hrs & \textgreater10hrs & \textgreater10hrs\\
\texttt{DSJC1000.5}  &1000&249826&  ?  &10&108&6608.86& oom &6100& \textgreater10hrs &  & -\\
\texttt{DSJC1000.9}  &1000&449449&  ?  &215&228&5.64& \textgreater10hrs &6100& \textgreater10hrs & \textgreater10hrs & -\\
\texttt{DSJR500.1c}  &500&121275&85&85&85&\cellcolor[gray]{0.8}0.14&\cellcolor[gray]{0.8} 0.10&6100&272.01&1.29&0.60\\
\texttt{DSJR500.5}  &500&58862&122&122&122&689.48& 102.75&6100&322.6&6862.28& \textgreater10hrs\\
\texttt{le450\usc{}25c}  &450&17343&25&25&28&19889.92& oom &6100&\cellcolor[gray]{0.8} init & \textgreater10hrs & -\\
\texttt{le450\usc{}25d}  &450&17425&25&25&28&16262.45& oom &6100&\cellcolor[gray]{0.8} init & \textgreater10hrs & -\\
\texttt{queen9\usc{}9}  &81&1056&10&9&10&\cellcolor[gray]{0.8}0.44&\cellcolor[gray]{0.8} 9.33&3&34.51&20.48&74.00\\
\texttt{queen10\usc{}10}  &100&2940&11&10&11&\cellcolor[gray]{0.8}4.09&\cellcolor[gray]{0.8} 140.76&3&647.65&587.32&26393.20\\
\texttt{queen11\usc{}11}  &121&3960&11&11&11&33.1&14354.16&3&\cellcolor[gray]{0.8} 1759.08&19208.45&21858.50\\
\texttt{queen12\usc{}12}  &144&5192&12&12&13&297.91& \textgreater10hrs &3& \textgreater10hrs & \textgreater10hrs & -\\
\texttt{queen13\usc{}13}  &169&6656&13&13&14&2739.39& \textgreater10hrs &100& \textgreater10hrs & \textgreater10hrs & -\\
\texttt{queen14\usc{}14}  &196&8372&14&14&15&3139.2& \textgreater10hrs &100& \textgreater10hrs & \textgreater10hrs & -\\
\texttt{queen15\usc{}15}  &225&10360&15&15&16&3235.39& \textgreater10hrs &100& \textgreater10hrs & \textgreater10hrs & -\\
\texttt{queen16\usc{}16}  &256&12640&16&16&18&3527.63& \textgreater10hrs &100& \textgreater10hrs & \textgreater10hrs & -\\
\texttt{myciel3}  &11&23&4&3&4&0&0&3&0&0.01& -\\
\texttt{myciel4}  &20&71&5&4&5&\cellcolor[gray]{0.8}0&\cellcolor[gray]{0.8} 0.09&3&111.26&0.47& -\\
\texttt{myciel5}  &47&236&6&4&6&\cellcolor[gray]{0.8}0.01&\cellcolor[gray]{0.8} 392.21&3& -&3207.63& -\\
\texttt{myciel6}  &95&755&7&4&7&0.76& \textgreater10hrs &3& \textgreater10hrs & \textgreater10hrs & -\\
\texttt{myciel7}  &191&2360&8&5&8&2198& \textgreater10hrs &3& \textgreater10hrs & \textgreater10hrs & -\\
\texttt{1-Insertions\usc{}4}  &67&232&5&3&5&0.69& \textgreater10hrs &3& \textgreater10hrs & \textgreater10hrs & -\\
\texttt{1-Insertions\usc{}5}  &202&1227&  ?  &2&6&23708.86&\textgreater10hrs&3& \textgreater10hrs & \textgreater10hrs & -\\
\texttt{2-Insertions\usc{}4}  &149&541&  ?  &2&5& \textgreater10hrs & - &3& \textgreater10hrs & \textgreater10hrs & -\\
\texttt{3-Insertions\usc{}3}  &56&110&4&3&4&1.8& \textgreater10hrs &3& \textgreater10hrs & \textgreater10hrs & -\\
\texttt{3-Insertions\usc{}4}  &281&1046&  ?  &2&5&22684.73& \textgreater10hrs &3& \textgreater10hrs & \textgreater10hrs & -\\
\texttt{4-Insertions\usc{}3}  &79&156&4&3&4&314.85& \textgreater10hrs &3& \textgreater10hrs & \textgreater10hrs & -\\
\texttt{1-FullIns\usc{}4}  &93&593&5&4&5&\cellcolor[gray]{0.8}8.76&\cellcolor[gray]{0.8} 122.58&3& \textgreater10hrs & \textgreater10hrs & -\\
\texttt{1-FullIns\usc{}5}  &282&3247&6&3&6& \textgreater10hrs & - &3& \textgreater10hrs & \textgreater10hrs & -\\
\texttt{2-FullIns\usc{}4}  &212&1621&6&0&0& \textgreater10hrs & - &3& \textgreater10hrs & \textgreater10hrs & -\\
\texttt{2-FullIns\usc{}5}  &852&12201&7&4&7& \textgreater10hrs & - &3& \textgreater10hrs & \textgreater10hrs & -\\
\texttt{3-FullIns\usc{}4}  &405&3524&7&5&7& \textgreater10hrs & - &3& \textgreater10hrs & \textgreater10hrs & -\\
\texttt{4-FullIns\usc{}4} &690&6650&8&0&0& \textgreater10hrs & - &3& \textgreater10hrs & \textgreater10hrs & -\\
\texttt{latin\usc{}square\usc{}10}  &900&307350&  ?  &90&100&36.4& \textgreater10hrs &6100& \textgreater10hrs & \textgreater10hrs & -\\
\texttt{qg.order30} &900&26100&30&0&0& \textgreater10hrs & - &3&\cellcolor[gray]{0.8} 0.19& \textgreater10hrs & -\\
\texttt{wap06}  &947&43571&40&0&0& \textgreater10hrs & - &170&\cellcolor[gray]{0.8} 165.00& \textgreater10hrs & -\\
\texttt{r250.5} &250&14849&65&65&65&14.63&7.15& - & - & - &\cellcolor[gray]{0.8} 6.80\\
\texttt{r1000.1c} &1000&485090& ? &96&98&2.29& \textgreater10hrs & - & - & - & \textgreater10hrs\\
\texttt{r1000.5} &1000&238267&234&234&234&43020.09& 4690.16& - & - & - & \textgreater10hrs\\
\texttt{flat300\usc{}28\usc{}0} &300&21695&28&28&28&\cellcolor[gray]{0.8}144.57&\cellcolor[gray]{0.8} 19883.45& - & - & - & \textgreater10hrs\\
\texttt{flat1000\usc{}50\usc{}0} &1000&245000& ? &10&106&6795.93& \textgreater10hrs & - & - & - & \textgreater10hrs\\
\texttt{flat1000\usc{}60\usc{}0} &1000&245830& ? &10&105&6368.64& \textgreater10hrs &- & - & - & \textgreater10hrs\\
\texttt{flat1000\usc{}76\usc{}0} &1000&246708&?&12&106&6628.52& \textgreater10hrs &- &-&-& \textgreater10hrs\\
\end{tabular}
\caption{\label{tab:zdd} Results from computational experiments with \texttt{B\&P+ZDD}, cells
highlighted in grey show the fastest algorithm.  Entries labeled ``init'' indicate that the
initial upper bound equaled the root lower bound, cells labels ``oom'' indicate that the algorithm
ran out of memory.  Entries of 0.00 indicate that the length of time is lower than the precision of
the timer, and a dash indicates that the information is not available.}
\end{table}

A time limit of 10 hours was imposed for all experiments, and the ZDD size was limited to
$100\,000\,000$ nodes.  The initialization procedure from Section~\ref{sec:init} was run for 100
seconds for each instance to generate an initial pool; this did not contribute to the 10-hour time
limit.  Of the 40 instances tested, most were extremely difficult, and could not be solved by any
algorithm within the 10-hour time limit.  Data for all 47 instances are shown in
Table~\ref{tab:zdd}, and notation used in this table is given in Table~\ref{tab:cols}.  To provide
the most fair comparison between different computational platforms, all running times are scaled
according to the benchmark value of the {\em dfmax} utility reported.  In most cases, \citet{Held12}
is concerned with computing lower bounds instead of total solution times, so data from this paper
are omitted from Table~\ref{tab:zdd} (though they are still discussed in the sequel).

Computational comparisons across differing models, source code, and platforms is notoriously
difficult, and these data are no exception.  The MMT algorithm uses an initialization procedure with
a variable running time, depending on the difficulty of the problem.  However, \BPZ is given a flat
100 seconds of initialization, plus the length of time required to build the ZDD; moreover, the
initialization procedure used by \BPZ is strictly weaker than that of the MMT algorithm, since it
only uses the first phase of a two-phase procedure.  This choice was made to highlight the
performance benefits of using ZDDs; in principle, if \BPZ had access to the full initialization
procedure used by the MMT algorithm, its results would be even better.  Therefore, to produce the
fairest comparison between the algorithms, the compared running times are solely the time spent on
the branch-and-price algorithm, not including the initialization time.

\BPZ was able to find and verify optimality for 15 of the 47 instances tested.  In four cases, \BPZ
is able to find and verify optimality at least an order of magnitude faster than any competing
algorithm.  Additionally, \BPZ is able to verify optimality for three new instances
(\texttt{1-FullIns\usc4}, \texttt{r1000.5}, and \texttt{flat\usc300\usc0}) that have not been solved
previously by branch-and-price algorithms in the literature (however, in the case of
\texttt{r1000.5}, the ZDD construction took longer than 10 hours).  One other instance,
\texttt{DSJC250.9}, has only been solved by the branch-and-price solver of \citet{Held12}; their
algorithm found a solution in 8685 (adjusted) CPU seconds.

It was observed that modifying the initial pool size can dramatically improve the running time of
\BPZ; for example, running the initialization procedure for 6100 seconds (the default initialization
time limit in \citet{Malaguti11} and \citet{Morrison14wide}) allows \BPZ to solve \texttt{DSJC125.5}
in 31 seconds.  Similarly, running the initialization procedure for only 3 seconds allows \BPZ to
solve \texttt{queen9\usc9} in 2.3 seconds.  This is explained by noting that a large initial pool
can slow down the LP solver for the RMP.

Finally, there are five instances which were solved substantially faster by the MMT graph coloring
solver than by \BPZ; however, four of these instances were solved at the root node by the MMT
solver due to a better initialization procedure, and so do not provide a useful comparison against
\BPZ.  This leaves only one instance (\verb+queen11_11+) for which some other algorithm
substantially outperforms \BPZ; for this instance, the lower bound is equal to the optimal objective
value, which means the search can be terminated as soon as an optimal solution is found.  

Data were collected regarding the average length of time needed to solve the pricing problem for
each instance, as well as the growth in size of the ZDD over the course of the algorithm.  The
average growth in size of a ZDD for any problem was 14\%, with a standard deviation of 27\%.  In
one case, the size of the ZDD nearly doubled, at 94\% growth; however, even in this case, the length
of time needed to solve the pricing problem was not impacted substantially.  In most cases when the
ZDD could be fully constructed, the length of time needed to solve one iteration of the pricing
problem was under a second.  

\begin{table}
\scriptsize
\begin{tabular}{lrr|rrrrrrr}
Instance & $n$  &  $m$  & exp&id & $Z_i$ & $Z_f$ & \% change &$|\textrm{col}|$&$t_{\textrm{price}}$\\\hline
\texttt{DSJC125.5}  &125&3891&  599&1199  &48367&52207&7.9&3462&7.9\\
\texttt{DSJC125.9}  &125&6961&  55&111  &623&627&0.6&81&0\\
\texttt{DSJC250.5}  &250&15668&  5122&10244  &1476916&1511171&2.3&37404&4108.62\\
\texttt{DSJC250.9}  &250&27897&  16411&32823  &2893&2960&2.3&808&2.88\\
\texttt{DSJC500.5}  &500&62624&  25&50  &83507135&83509619&0.003&4519&30791.1\\
\texttt{DSJC500.9}  &500&112437&  46634&93268  &15397&15913&3.4&5259&25.07\\
\texttt{DSJC1000.5}  &1000&249826&-&  -&-  & - &-&-&-\\
\texttt{DSJC1000.9}  &1000&449449&  6466&12932  &102909&105366&2.4&24892&141.52\\
\texttt{DSJR500.1c}  &500&121275&  18&37  &2443&2447&0.2&68&0.03\\
\texttt{DSJR500.5}  &500&58862&  74&149  &1809872&2222124&22.8&8&14.77\\
\texttt{le450\usc{}25c}  &450&17343&-&  -&-  & - & -&-&-\\
\texttt{le450\usc{}25d}  &450&17425&-&  -&-  & - & -&-&-\\
\texttt{queen9\usc{}9}  &81&1056&  11&23  &50719&54727&7.9&216&0.34\\
\texttt{queen10\usc{}10}  &100&2940&  76&153  &295500&308571&4.4&1088&11.99\\
\texttt{queen11\usc{}11}  &121&3960&  1902&3805  &1867378&1923347&3&16394&2138.55\\
\texttt{queen12\usc{}12}  &144&5192&  1224&2448  &12426874&12526852&0.8&24743&18255.37\\
\texttt{queen13\usc{}13}  &169&6656&  98&195  &88797420&88874820&0.09&4985&33826.38\\
\texttt{queen14\usc{}14}  &196&8372&  3&6  & 100000008  & 100000008 & 0 &5&37809.97\\
\texttt{queen15\usc{}15}  &225&10360&-&  -&-  & - & -&-&-\\
\texttt{queen16\usc{}16}  &256&12640&-&  - &- & - & -&-&-\\
\texttt{myciel3}  &11&23&  4&9  &29&29&0&0&0\\
\texttt{myciel4}  &20&71&  191&383  &152&169&11.2&0&0.02\\
\texttt{myciel5}  &47&236&  160622&321245  &1429&2188&53.1&26&35.7\\
\texttt{myciel6}  &95&755&  94395&188789&40191&70326&75&3658&357.24\\
\texttt{myciel7}  &191&2360& 7816&15632&7191878&7344883&2.1&3742&8653.59\\
\texttt{1-Insertions\usc{}4}  &67&232&  72106&144211  &85112&106550&25.2&12596&443.94\\
\texttt{1-Insertions\usc{}5}  &202&1227& - &-& - & - & -&-&-\\
\texttt{2-Insertions\usc{}4}  &149&541& - & - & - & -&-&-\\
\texttt{3-Insertions\usc{}3}  &56&110&  29128&58256&94885&183834&93.7&38580&394.57\\
\texttt{3-Insertions\usc{}4}  &281&1046&-&-& - & - & -&-&-\\
\texttt{4-Insertions\usc{}3}  &79&156& 14177&28353&6585989&6693239&1.6&12115&12196.88\\
\texttt{1-FullIns\usc{}4}  &93&593&  7422&14845  &148275&155237&4.7&1376&51.53\\
\texttt{1-FullIns\usc{}5}  &282&3247& - & - & -&- & -&-&-\\
\texttt{2-FullIns\usc{}4}  &212&1621& - & - & -&- & -&-&-\\
\texttt{2-FullIns\usc{}5}  &852&12201& - & - & -&- & -&-&-\\
\texttt{3-FullIns\usc{}4}  &405&3524& - & - & -&- & -&-&-\\
\texttt{4-FullIns\usc{}4} &690&6650& - & - & -&- & -&-&-\\
\texttt{latin\usc{}square\usc{}10}  &900&307350&  5265&10529  &52807&52807&0&13310&26.35\\
\texttt{qg.order30} &900&26100& - & - &-& - & -&-&-\\
\texttt{wap06}  &947&43571& - & - & - &-& -&-&-\\
\texttt{r250.5} &250&14849& 25&51 &137683&266893&93.8&0&0.57\\
\texttt{r1000.1c} &1000&485090& 215082&430163 &11762&11997&2&1298&165.64\\
\texttt{r1000.5} &1000&238267& 192&385 &37664084&38165484&1.3&1317&3741.64\\
\texttt{flat300\usc{}28\usc{}0} &300&21695& 1004&2009 &5817662&5858003&0.7&24665&8699.34\\
\texttt{flat1000\usc{}50\usc{}0} &1000&245000& -&- &  -  & - & -&-&-\\
\texttt{flat1000\usc{}60\usc{}0} &1000&245830& -&- &  -  & - & -&-&-\\
\texttt{flat1000\usc{}76\usc{}0} &1000&246708&-&-&  -  &-& -&-&-\\
\end{tabular}
\caption{\label{tab:zdd2} Detailed statistics for the computational experiments with
\texttt{B\&P+ZDD}.  Cells with dashes indicate that the information is not available due to time
limits or memory constraints.}
\end{table}

Details regarding these data are shown in Table~\ref{tab:zdd2}, with column headings again given in
Table~\ref{tab:cols}.  Here, note that the number of nodes explored in the tree (column 4) is the
total number of nodes at which children were generated.  The number of nodes identified in the tree
(column 5) includes those nodes for which the RMP was solved, but could be pruned away before
generating children.  Finally, the last column in this table shows the total CPU time taken to solve
every pricing problem for each instance.  As the bulk of the work to solve an instance is in solving
the RMP and solving the pricing problem, the time needed to solve the RMP for a particular instance
is approximately $t$ (column 8, Table~\ref{tab:zdd}) minus $t_{\textrm{price}}$.

In order to assess the efficacy of the search strategy, additional computational experiments were
run against the 12 instances from the DIMACS database that \BPZ was able to solve within the time
limit.  \BPZ was modified for these tests to use DFS instead of CBFS, and the results are presented
in Table~\ref{tab:dfs}.  

\begin{table}[h]
\begin{center}
\begin{tabular}{l|rr}
	Instance & $t_{DFS}$ & $t_{CBFS}$ \\\hline
	\texttt{DSJC125.5} & 7689.02 & \cellcolor[gray]{0.8} 284.92 \\
	\texttt{DSJC125.9} & \cellcolor[gray]{0.8} 0.17 & 0.21\\
	\texttt{DSJC250.9} & 1587.94 & \cellcolor[gray]{0.8} 649.90\\
	\texttt{DSJR500.1c} & \cellcolor[gray]{0.8} 0.07 & 0.10\\
	\texttt{DSJR500.5} & 103.70 & \cellcolor[gray]{0.8} 102.75\\
	\texttt{queen9\usc{}9} & 11.31 & \cellcolor[gray]{0.8} 9.33\\
	\texttt{queen10\usc{}10} & \cellcolor[gray]{0.8} 134.07 & 140.76\\
	\texttt{queen11\usc{}11} & 32057.94 & \cellcolor[gray]{0.8} 14354.16 \\
	\texttt{myciel4} & \cellcolor[gray]{0.8} 0.08 & 0.09\\
	\texttt{myciel5} & \cellcolor[gray]{0.8} 244.37 & 392.21\\
	\texttt{1-FullIns\usc{}4} & \cellcolor[gray]{0.8} 59.04 & 122.58\\ 
\end{tabular}
\caption{\label{tab:dfs} A comparison of the running times of \texttt{B\&P+ZDD} using DFS and CBFS.}
\end{center}
\end{table}

In these experiments, there are 3 instances in which CBFS (with the positive assignment rule)
significantly outperforms DFS, and 2 instances in which DFS outperforms CBFS.  For the remaining 7
instances, both algorithms perform within a few seconds of each other.  For the three instances
where CBFS outperformed DFS, the improvement was an order of magnitude in one case, and over twice
as fast in the other two cases.  Therefore, it was concluded that in general, CBFS with the positive
assignment rule is a better choice of branching strategy for the graph coloring problem than DFS.

\section{Conclusions and Future Work}\label{sec:end}

This paper presents a framework for using standard integer branching in conjunction with
branch-and-price algorithms; this framework solves the pricing problem using a zero-suppressed
binary decision diagram that is constructed during a preprocessing phase.  When new columns are
generated, they are restricted from generation by the ZDD a second time; this allows the constrained
pricing problem to be solved exactly at every iteration of the algorithm.  Using this technique
combined with a new contour definition for the cyclic best-first search strategy to counterbalance
the resulting lopsided search tree, the standard integer branching scheme can be used in conjunction
with a branch-and-price algorithm, which yields a much faster and more direct solution method in
many cases.  Computational results were presented showing that a branch-and-price algorithm
implementation for the graph coloring problem in some cases outperforms other branch-and-price 
graph coloring solvers in the literature.  In several cases, this performance is an improvement of
an order of magnitude or more, though an exact comparison is difficult due to differences in
initialization.

A number of future research directions exist for this method; firstly, this paper proposes a ZDD
algorithm for the graph coloring problem.  However, there is nothing specific to graph coloring in
Sections~\ref{sec:defn} and \ref{sec:restrict}; in fact, the techniques described here are general,
and could be applied to other problems.  Some preliminary results using ZDDs with the generalized
assignment problem are described in \citet{Morrison14thesis}, but more work must be done to show
their effectiveness on other types of problems.  Moreover, ZDDs can also be used even if the
branch-and-price solver does not require the solution of the constrained pricing problem (for
instance, the robust branch-and-price-and-cut algorithm of \citealp{deAragao03}).  In these
settings, the ZDD does not need to have restrictions imposed via \RestrictSet when a new variable is
generated; however, they may still provide benefits, since the ZDD is able to produce a variable of
most negative reduced cost at every iteration of column generation.  Thus, additional research can
be done to study how ZDDs interact with other established branch-and-price methods.

Secondly, research can be performed to determine the best way to use ZDDs when the entire data
structure will not fit in memory; one proposed idea uses approximate ZDDs (described in
\citealp{Bergman12b}) to solve the pricing problem in these settings.  An approximate ZDD is a
width-constrained ZDD that does not eliminate any valid solutions to the pricing problem, but may
accept some inputs that are not valid solutions to the pricing problem.  In this setting, a
post-generation check can be performed to see if the ZDD produced a set that is a valid solution to
the pricing problem; if not, the \RestrictSet routine can be called with the erroneous solution to
prevent it from being generated again.  If the approximate ZDD can be constructed in an appropriate
fashion, it is hypothesized that invalid solutions will be generated relatively infrequently, and
the algorithm will not suffer much loss of efficiency.

A final important question addresses the addition and removal of restrictions from the pricing
problem ZDD.  Many standard branch-and-price algorithms will generate multiple columns in between
each intermediate solution of the RMP.  It has been observed that this can improve algorithm
performance; thus, an interesting research direction may be to modify the ZDD algorithm to generate
multiple columns in a single pass through the data structure.  Moreover, in many cases the variable
pool for branch-and-price algorithms may grow quite large over the course of the algorithm.  Since
the size of this pool directly impacts the solution time for the RMP, and most of the elements of
this pool are never used in any optimal solutions, most branch-and-price solvers will prune the pool
by deleting variables with very large positive cost.  In such a setting, it is necessary to modify
$\ZF$ to allow removed variables to be generated again; these variables can be stored in an
auxiliary pool to be scanned before the ZDD is queried.  Thus, future research should analyze the
effects of such a modification.

\section*{Acknowledgments}

The computational results reported were obtained at the Simulation and Optimization Laboratory at
the University of Illinois, Urbana-Champaign. This research has been supported in part by the Air
Force Office of Scientific Research (FA9550-10-1-0387, FA9550-15-1-0100), the National Defense
Science \& Engineering Graduate Fellowship (NDSEG) Program, and the National Science Foundation
Graduate Research Fellowship Program under Grant Number DGE-1144245.  Any opinion, findings, and
conclusions or recommendations expressed in this material are those of the author(s) and do not
necessarily reflect the views of the United States Government. The authors would also like to thank
Jason Sauppe for conversations and help regarding the branch-and-price implementation, as well as
Enrico Malaguti for the source code used in our initialization procedure.  Finally, the authors
thank the two anonymous referees and the associate editor for comments which resulted in a
much-improved draft of this paper.

\end{document}